\providecommand{\tabularnewline}{\\}
\providecommand{\algorithmname}{Algorithm}
\DeclareRobustCommand{\lyxdeleted}[3]{{\color{lyxdeleted}\lyxsout{#3}}}
\DeclareRobustCommand{\lyxsout}[1]{\ifx\\#1\else\sout{#1}\fi}
\theoremstyle{plain}
\newtheorem{prop}{\protect\propositionname}
\theoremstyle{plain}
\newtheorem{lem}{\protect\lemmaname}
\definecolor{lightcyan}{rgb}{0.88, 1.0, 1.0}
\author{


\IEEEauthorblockN{Bowen~Li~and Junting~Chen}

\IEEEauthorblockA{School of Science and Engineering (SSE) and Shenzhen Future Network of Intelligence Institute (FNii-Shenzhen) \\ The Chinese University of Hong Kong, Shenzhen, Guangdong 518172, China}

}
\newcommand{\newac}{\newacronym}
\newcommand{\ac}{\gls}
\newcommand{\Ac}{\Gls}
\newcommand{\acpl}{\glspl}
\renewcommand{\lyxdeleted}[3]{{\color{lyxdeleted}{}}}
\providecommand{\lemmaname}{Lemma}
\providecommand{\propositionname}{Proposition}
\begin{document}
\title{Radio Map Assisted Routing and Predictive Resource Allocation over
Dynamic Low Altitude Networks}
\maketitle
\begin{abstract}
Dynamic low altitude networks offer significant potential for efficient
and reliable data transport via \acpl{uav} relays which usually operate
with predetermined trajectories. However, it is challenging to optimize
the data routing and resource allocation due to the time-varying topology
and the need to control interference with terrestrial systems. Traditional
schemes rely on time-expanded graphs with uniform and fine time subdivisions,
making them impractical for interference-aware applications. This
paper develops a dynamic space-time graph model with a cross-layer
optimization framework that converts a joint routing and predictive
resource allocation problem into a joint bottleneck path planning
and resource allocation problem. We develop explicit deterministic
bounds to handle the channel uncertainty and prove a monotonicity
property in the problem structure that enables us to efficiently reach
the globally optimal solution to the predictive resource allocation
subproblem. Then, this approach is extended to multi-commodity transmission
tasks through time-frequency allocation, and a bisection search algorithm
is developed to find the optimum solution by leveraging the monotonicity
of the feasible set family. Simulations verify that the single commodity
algorithm approaches global optimality with more than 30 dB performance
gain over the classical graph-based methods for delay-sensitive and
large data transportation. At the same time, the multi-commodity method
achieves 100X improvements in dense service scenarios and enables
an additional 20 dB performance gain by data segmenting.
\end{abstract}

\begin{IEEEkeywords}
Low altitude communications, dynamic topology, predictive communications,
radio map, space-time graph, cross-layer optimization.
\end{IEEEkeywords}

\section{Introduction\label{sec:intro}}

\mysubsubnote{Adust the introdcution according to the following outline.}

There have been rapidly growing human and autonomous robot activities
in the low altitude airspace, which typically extends from the rooftop
to 1,000 meters above ground, where the operations may include the
transport of commodities and cargo, short-distance transit and tourism,
and emergency operations \cite{WuXuZenNg:J21,ZhoSheLiHan:J23}. The
\ac{uav} network forms a wireless communication network that is capable
of operating parallel to the terrestrial cellular network \cite{BaiZhaZhaCha:J23,FanWuXia:J24}.
The key characteristic of the \ac{uav} network is that the network
topology is time-varying, but usually \emph{predictable} according
to the primary missions of the \acpl{uav}, such as pre-planned cargo
delivery following registered flight paths \cite{XiaZhoDaiQu:J23,LiChe:J24}.
Such a time-varying but predictable network topology induces a new
degree of freedom for network communication, where one can optimize
for large-timescale opportunistic transmission for delay-tolerant
data transportation.

Specifically, we consider to exploit the low altitude \ac{uav} network
for delay-tolerant data transportation, where the data package is
routed over the dynamic \ac{uav} network, and the data source and
sink can be sensor nodes or fusion units located on the ground or
in the airspace. Such a network transmission model finds many applications,
such as in environment monitoring and content distribution for caching.
However, transmission over the low altitude \ac{uav} network imposes
the following challenges:
\begin{itemize}
\item \emph{Dynamical topology}: The topology constructed by \acpl{uav}
changes over time, leading to the \ac{csi} between the transmitting
and receiving nodes (whether intended or not) varies over time.
\item \emph{Air-to-ground interference}: Aerial transmissions may cause
substantial interference with terrestrial cellular networks, due to
the high likelihood of \ac{los} conditions from \acpl{uav} in the
sky.
\end{itemize}

Routing protocols in dynamic networks are typically classified into
social-aware and mobility-aware routing based on their ability to
predict future network topology. In the absence of accurate topology
information, most methods focus on estimating transmission likelihood
by analyzing social properties and selecting routes accordingly \cite{HanHuiKumMar:J12,GanJai:J23,BabP:J24}.
However, these methods often fail to guarantee \ac{qos} due to inaccuracies
in the prediction mechanisms. In addition, these methods do not capitalize
on the \emph{predictive} property of aerial networks, where \ac{uav}
trajectories are predetermined before tasks like cargo delivery commence,
thus allowing for predicting information about future network topology.
When future topology information is available, digital twins, radio
maps, or channel models can predict future \ac{csi}. This facilitates
the use of graph-based routing protocols for robust service \cite{LiLuXueZha:J19,LiWanJinChe:J14,QuDonDaiWei:J19,LiuZhuYanLi:J22,LiCheHuaYin:J15,JiaZhaYanYua:J19,HanXuZhaWan:J23}.
Some approaches partition the dynamic network into static graph snapshots
and select transmission paths for each snapshot \cite{LiLuXueZha:J19}.
To build the temporal relationship over snapshots, time-expanded graphs
and time-space-combined routing algorithms have been introduced. For
instance, some methods \cite{LiWanJinChe:J14,QuDonDaiWei:J19,LiuZhuYanLi:J22}
enable multi-hop transmissions within a single time slot by subdividing
time into smaller intervals, ensuring stable channel gains and avoiding
causality issues, though this complicates large timescale optimization.
Other methods \cite{LiCheHuaYin:J15,JiaZhaYanYua:J19,HanXuZhaWan:J23}
restrict each time slot to a single-hop transmission, avoiding causality
concerns. However, classical uniform slot durations are inadequate
for low altitude aerial networks, which must adapt their transmission
strategies based on terrestrial network conditions, and determining
the optimal slot length remains challenging.

Some recent works \cite{MeiZha:J21,VaeLinZha:J24,MeiWuZha:J19,LiuZhe:J24,HuaMeiXu:J19,MeiZha:J20,HouDenShi:J21}
attempted to mitigate air-to-ground interference by time/frequency
orthogonalization, beam orthogonalization, and path design. For example,
works \cite{MeiWuZha:J19,LiuZhe:J24} focus on allocating spectrum
resources to aerial nodes based on real-time cellular network demands,
effectively separating aerial and terrestrial networks temporally
and spectrally. Works \cite{MeiZha:J20,HouDenShi:J21} employ directional
beamforming that targets specific receivers and reduces interference
to adjacent nodes by controlling the side lobes. However, these methods
require small-scale \ac{csi}, which is costly and sometimes unfeasible,
because small-scale \ac{csi} may not be predictable due to the randomness
nature of wireless channels. Works \cite{HuaMeiXu:J19,LiuZhe:J24}
optimize \ac{uav} positions to keep them away from ground users,
thus minimizing interference. Thus, there methods are not applicable
to the case where \ac{uav} trajectories are determined and cannot
be altered. Some preliminary results reported in \cite{LiChe:J24,LiChe:J24b}
have shown that it is possible to exploit the large-scale \ac{csi}
for predictive transmission optimization with air-to-ground interference
control when the routing is determined as fixed. However, when the
routes are to be optimized, the route selection and the transmission
timing optimization are coupled, whereas a brute-force search for
the best route requires exponential complexity.

In this paper, we study the interference-aware predictive communications
in low altitude dynamic aerial networks. With the aid of radio maps,
a predictive problem on the large timescale for route, transmission
power, and timing (time boundary) planning is formulated. Towards
this end, two main technical challenges are needed to be addressed:
\begin{itemize}
\item How to jointly optimize the route and the transmission timing in a
network with dynamic topology.
\item How to minimize air-to-ground interference when ensuring aerial communication
quality.
\end{itemize}
$\quad$To tackle these challenges, we develop a dynamic space-time
graph model with virtual edges and formulate a cross-layer interference-aware
optimization problem. Consequently, the routing optimization can be
solved using a bottleneck path planning algorithm, while the power
and timing allocation optimization is reformulated as an inner-outer
problem. Our key contributions are made as follows:
\begin{itemize}
\item We propose a dynamic space-time graph model with an algorithm framework
for the joint routing and predictive resource allocation over a dynamic
network.
\item We develop explicit deterministic bounds to handle the channel uncertainty
for an efficient cross-layer algorithm design. We also prove a monotonicity
property in the problem structure that enables us to efficiently reach
the globally optimal solution to the predictive resource allocation
subproblem.
\item For multi-commodity transportation, we decouple the problem to multiple
parallel single commodity transportation subproblems. While the subproblems
are coupled via the shared time-frequency resources, we show that
the optimal time-frequency allocation can be efficiently found via
exploiting a monotonicity property of the problem formulation.
\item Simulations show that the single commodity algorithm achieves near-global
optimality, providing 30 dB improvements for delay-sensitive and large
data transportation compared to classical graph-based algorithms.
Additionally, the multi-commodity algorithm delivers a 100X improvement
in dense service scenarios. For a single large commodity, segmenting
it into smaller parts for transmission further achieves an additional
20 dB performance improvement.
\end{itemize}
\ \ \ The rest of the paper is organized as follows. Section \ref{sec:System-Model}
presents the communication system model, the graph model, and the
problem formulation. Section \ref{sec:alg_single} develops the single
commodity transportation strategy based on the dynamic space-time
graph with virtual edges. Section \ref{sec:System-Model} extends
the single commodity strategy to the multi-commodity transportation.
Numerical results are demonstrated in Section \ref{sec:Simulation},
and conclusions are given in Section \ref{sec:Conclusion}.

\section{System Model\label{sec:System-Model}}

\begin{figure}
\begin{centering}
\includegraphics[width=0.9\columnwidth]{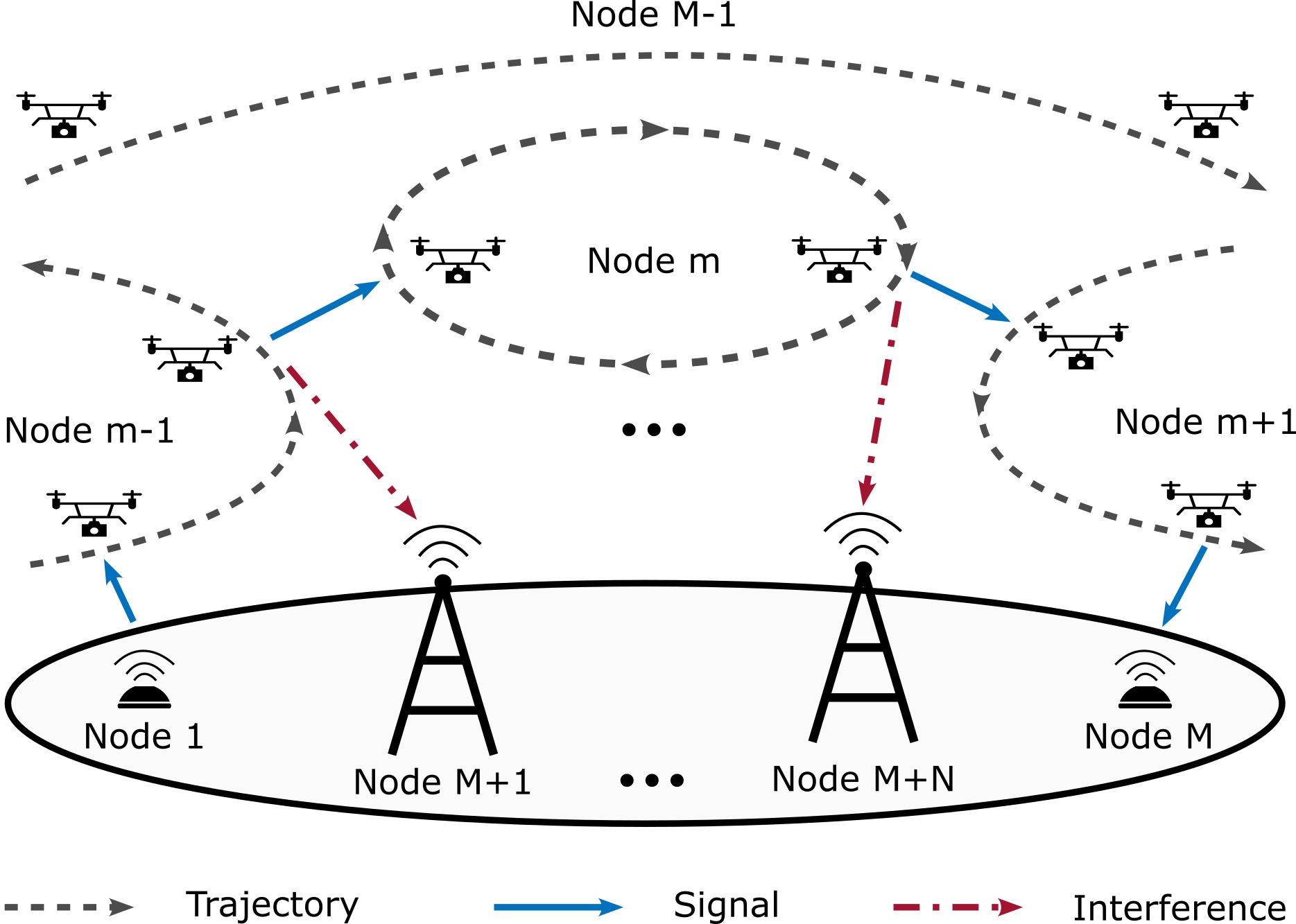}
\par\end{centering}
\caption{\label{fig:system_model}Interference-aware aerial communication system
model. The task is to transport data from node $1$ to node $M$ via
\acpl{uav} indexed from $2$ to $M-1$, while minimizing interference
to neighboring nodes indexed from $M+1$ to $M+N$.}
\end{figure}

\begin{center}
\begin{table}
\caption{\label{tab:key_notations}Key Notations}

\centering{}%
\begin{tabular}{>{\raggedright}p{0.26\columnwidth}|p{0.64\linewidth}}
\hline 
\textbf{Symbols} & \textbf{Meaning}\tabularnewline
\hline 
\rowcolor{lightcyan}$\mathcal{M}$, $\mathcal{N}$ & Node sets of the aerial and neighbor networks (Section \ref{sec:System-Model}).\tabularnewline
$S$ & Size of the data package (Section \ref{sec:System-Model}).\tabularnewline
\rowcolor{lightcyan}$h_{m,n}$, $g_{m,n}$, $\xi_{m,n}$ & Instantaneous channel gain, average channel gain, and small-scale
fading between node $m$ and node $n$ (Section \ref{subsec:channel_model}).\tabularnewline
$p_{m,n}$, $c_{m,n}$ & Instantaneous power policy and channel capacity from node $m$ to
$n$ (Section \ref{subsec:channel_model}).\tabularnewline
\rowcolor{lightcyan}$I_{m,j}$ & Instantaneous air-to-ground interference from aerial node $m$ to
ground node $j$ (Section \ref{subsec:channel_model}).\tabularnewline
$w_{m,n}^{k}(t_{k},t_{k+1})$ & Minimum worst-case interference during the transmission from $m$
to $n$ over $[t_{k},t_{k+1})$ (Section \ref{subsec:comm_model}).\tabularnewline
\rowcolor{lightcyan}$t_{k}$, $o(k)$ & The $k$th time boundary and selected relay (Section \ref{subsec:graph_model}).\tabularnewline
$\begin{aligned} & \mathscr{G}(\mathbf{t})=\\
 & \quad(\hat{\mathcal{M}},\hat{\mathcal{E}},\hat{\mathcal{W}}(\mathbf{t}))
\end{aligned}
$ & Dynamic space-time graph with node, edge, and weight sets (Section
\ref{subsec:graph_model}).\tabularnewline
\rowcolor{lightcyan}$\mathcal{Z}$ & Set of commodities (Section \ref{subsec:multi_commodity_model}).\tabularnewline
$s_{z}$, $d_{z}$ & Source and destination of commodity $z$ (Section \ref{subsec:multi_commodity_model}).\tabularnewline
\rowcolor{lightcyan}$t_{k,z}$, $o(k,z)$ & $k$th time boundary and relay for commodity $z$ (Section \ref{subsec:multi_commodity_model}).\tabularnewline
$l_{z}(t)$ & Instantaneous normalized time-frequency resource for commodity $z$
(Section \ref{subsec:multi_commodity_model}).\tabularnewline
\rowcolor{lightcyan}$\max_{k\in\mathcal{M}}\{w^{k}\}$ & Maximum value among a set of known scalars $w^{k}$; no optimization
is involved.\tabularnewline
$\max_{p\in\mathcal{P}}f(p)$ & Maximization over decision variable $p$; represents an optimization
problem.\tabularnewline
\hline 
\end{tabular}
\end{table}
\par\end{center}

Consider a data transportation task that delivers a data package of
size $S$ from a source node to a destination node via a group of
$M-2$ aerial nodes, as shown in Fig.~\ref{fig:system_model}. For
the ease of exposition, denote the source node as the $1$st node
and the destination node as the $M$th node, and thus, the set $\mathcal{M}=\{1,2,\dots,M\}$
of all $M$ nodes forms an aerial communication network. Besides,
there is a set $\mathcal{N}=\{M+1,M+2,\dots,M+N\}$ of nodes in the
neighbor network that requires interference protection. The positions
or trajectories of these nodes, defined as $\mathbf{q}_{m}(t)\in\mathbb{R}^{3}$,
$m\in\mathcal{M}\cup\mathcal{N}$, are known for a certain time horizon
$t\in[0,T]$. The key notations used in the paper are concluded in
Table \ref{tab:key_notations}.

\subsection{Channel Model and Radio Model\label{subsec:channel_model}}

We consider a flat fading channel model, where the instantaneous channel
power gain between two nodes $m\neq n\in\mathcal{M}\cup\mathcal{N}$
is given by 
\begin{equation}
h_{m,n}\left(t\right)=g_{m,n}\left(t\right)\xi_{m,n}\left(t\right),\,t\in\left[0,T\right]\label{eq:channel_model}
\end{equation}
where $g_{m,n}(t)$ is the expected channel gain and $\xi_{m,n}(t)$
is a random variable following $\text{Gamma}(\kappa_{m,n}(t),1/\kappa_{m,n}(t))$
distribution to capture the small-scale fading. Accordingly, the power
gain at time $t$ follows Gamma distribution with $\text{Gamma}(\kappa_{m,n}(t),g_{m,n}(t)/\kappa_{m,n}(t))$.

Assume the large-scale channel statistics between any two positions
$\mathbf{q}_{m}(t)$ and $\mathbf{q}_{n}(t)$ are known in advance
and captured by a predefined function 
\[
(g_{m,n}(t),\kappa_{m,n}(t))=\Xi(\mathbf{q}_{m}(t),\mathbf{q}_{n}(t)).
\]
This assumption is feasible with the use of radio maps, which are
data-driven models that correlate the locations of transmitters and
receivers to large-scale \ac{csi}, including path loss, shadowing,
and the statistics of small-scale fading \cite{SatSutIna:J21,liuche:J23,ZenCheXuWu:J24}.
By integrating the predictive positions of the nodes, \emph{e.g.},
$\mathbf{q}_{m}(t)$ and $\mathbf{q}_{n}(t)$, it becomes possible
to forecast large-scale channel gains $g_{m,n}(t)$ and the statistics
of small-scale fading $\xi_{m,n}(t)$ over time. Consequently, one
can predict the channel power gain distribution over time and effectively
plan communication strategies in advance. In other words, terrestrial
control center is able to access the required coefficients without
\ac{uav} real-time feedback. Note that $h_{m,n}(t)$ is the instantaneous
power gain at time $t$ that is not available ahead of time due to
the randomness of $\xi_{m,n}\left(t\right)$ from the small-scale
fading.\footnote{The proposed method remains effective when considering radio map construction
errors, as the instantaneous channel gain can still be approximated
by a Gamma distribution, by using the method in \cite{AlYan:J10}.}

On the other hand, the coverage radio maps of the neighboring network
nodes, denoted by $\Xi_{j}$ for $j\in\mathcal{N}$, map an aerial
position $\mathbf{q}_{n}(t)$ to the corresponding power gain from
neighbor node $j$. These radio maps characterize the received power
gain at position $\mathbf{q}_{n}(t)$ from neighboring node $j$,
and are utilized to predict the ground-to-air interference experienced
by the aerial nodes. Mathematically, the ground-to-air interference
from node $j$ to aerial node $n$ at time $t$ can be
\[
I_{nj}(t)=\Xi_{j}(\mathbf{q}_{n}(t)).
\]

Denote the transmission power of node $m$ targeted to node $n$ as
$p_{m,n}(t)\ge0$. Then, the received \ac{sinr} for node $n$ is
$p_{m,n}(t)h_{m,n}(t)/\delta_{n}^{2}(t)$ where $\delta_{n}^{2}(t)\triangleq\delta^{2}+\sum_{j}I_{nj}(t)$
and $\delta^{2}$ is noise power. To simplify the notation, we assume
that the interference-plus-noise term $\delta_{n}^{2}(t)$ is constant
across all nodes and time, and denote it simply as $\delta^{2}$.
It is worth noting that this assumption is made purely for notational
convenience and the proposed method remains valid and applicable even
when $\delta_{n}^{2}(t)$ varies across nodes and over time.

Assuming perfect Doppler compensation through advanced techniques
\cite{GuoZhaMuGao:J19,GonLiJiaWin:J21,LuZen:J24}, the \emph{instantaneous}
capacity from node $m$ to node $n$ is modeled as
\begin{equation}
c_{m,n}\left(t\right)=B\log_{2}\left(1+p_{m,n}\left(t\right)h_{m,n}\left(t\right)/\delta^{2}\right)\label{eq:def_c}
\end{equation}
where $B$ is the transmission bandwidth.

Meanwhile, the transmission from node $m$ will generate interference
to the neighbor nodes $j\in\mathcal{N}$, and the \emph{instantaneous}
interference power is modeled by
\begin{equation}
I_{m,j}\left(t\right)=\sum_{n\in\mathcal{M}}p_{m,n}\left(t\right)h_{m,j}\left(t\right).\label{eq:intf_l_p}
\end{equation}

\subsection{Link-Level Communication Model\label{subsec:comm_model}}

The key challenge of wireless communication in a dynamic network is
that there may not exist an \emph{instantaneous} end-to-end route
from the source to the destination with a satisfactory communication
quality, because the instantaneous end-to-end communication quality
for a multi-hop channel is determined by the capacity of the worst
link. For example, the destination node may be temporarily isolated
from all the other nodes, and thus, no end-to-end communication can
be established. However, some links among other communication nodes
in $\mathcal{M}$ may still experience good channel quality during
this period. As a result, the nodes have to temporarily cache the
data and pass it forward when the communication quality is good.

Specifically, we adopt a \emph{cache-and-pass} communication strategy,
where the entire data package of size $S$ is transferred completely
from one node to the other node before it is forwarded to the third
node. As a result, at each hop, only one node is selected as the target
for transporting data package, and the instantaneous interference
power in (\ref{eq:intf_l_p}) becomes
\begin{equation}
I_{m,j}\left(t\right)=p_{m,n}\left(t\right)h_{m,j}\left(t\right).\label{eq:intf_eqv}
\end{equation}
The timing of the transportation of the data package and the route
from the source to the destination is to be jointly optimized in this
paper.

Suppose that, during the allocated time interval $[t_{k},t_{k+1})$
for the $k$th hop, a node $m\in\mathcal{M}$ is scheduled to transport
the entire data package during this time segment to a node $n\in\mathcal{M}$.
Define $p_{m,n}(t)$ as the power allocation {\em policy}, which
maps the instantaneous channel state to the transmit power used to
deliver the data package from node $m$ to node $n$. Then, according
to the capacity definition in (\ref{eq:def_c}), the expected throughput
between node $m$ and node $n$ over the interval $[t_{k},t_{k+1})$
is 
\begin{equation}
\int_{t_{k}}^{t_{k+1}}\mathbb{E}\left[B\log_{2}\left(1+p_{m,n}\left(t\right)h_{m,n}\left(t\right)/\delta^{2}\right)\right]\label{eq:def_thp}
\end{equation}
and the worst-case interference to the neighboring network during
transmission is
\begin{equation}
\vartheta_{m,n}=\max_{t\in[t_{k},t_{k+1})}\left\{ \max_{j\in\mathcal{N}}\left\{ p_{m,n}\left(t\right)h_{m,j}\left(t\right)\right\} \right\} \label{eq:def_worst_case_itf}
\end{equation}
based on the instantaneous interference, defined in (\ref{eq:intf_eqv}).

We aim to minimize the worst-case interference, subject to the constraint
that the entire data of size $S$ is delivered. Denote the $w_{m,n}^{k}(t_{k},t_{k+1})$
as the minimum achievable worst-case interference, which serves as
the objective value of the following optimization problem\footnote{Note that one may also consider minimizing the total transmission
power under maximum power or interference constraints, as in the conference
version of this paper \cite{LiChens:C24}, or formulate the corresponding
dual problems. A similar solution approach may apply in such cases.} 
\begin{align}
\mathscr{P}\text{1: }\underset{\left\{ p_{m,n}\left(t\right)\right\} ,\vartheta_{m,n}}{\text{minimize}} & \ \vartheta_{m,n}\label{eq:obj_all_itf}\\
\text{subject to } & \ \int_{t_{k}}^{t_{k+1}}\mathbb{E}[B\log_{2}(1+p_{m,n}(t)/\delta^{2}\nonumber \\
 & \ \quad\quad\quad\quad\quad\quad\times h_{m,n}(t))]dt\ge S\label{eq:c_thp}\\
 & \ p_{m,n}\left(t\right)h_{m,j}\left(t\right)\le\vartheta_{m,n}\nonumber \\
 & \ \quad\quad\quad\quad\quad\forall j\in\mathcal{N},t\in\left[t_{k},t_{k+1}\right)\label{eq:c_itf}
\end{align}
where constraint (\ref{eq:c_itf}) represents the epigraph form of
the worst-case interference defined in (\ref{eq:def_worst_case_itf})
.

It may also be possible to slice one package into two, with size $S_{1}$
and $S_{2}$, $S_{1}+S_{2}=S$, and optimize two routes for transferring
the two packages. This becomes a multi-commodity transportation problem,
which will be discussed in Section \ref{sec:multi_task} as an extension
of the single commodity transportation problem as we focus here.

\subsection{Dynamic Space-Time Graph with Virtual Edges\label{subsec:graph_model}}

To jointly optimize the routing and predictive allocation over a dynamic
network, we resort to a graph-based approach and develop a dynamic
space-time graph with virtual edges.

\subsubsection{Graph model}

The aerial wireless communication network is modeled as a dynamic
space-time graph, denoted as $\mathscr{G}(\mathbf{t})=(\hat{\mathcal{M}},\hat{\mathcal{E}},\hat{\mathcal{W}}(\mathbf{t}))$,
under the allocated time boundaries $\mathbf{t}\triangleq(t_{1},t_{2},\cdots,t_{M})$,
as shown in Fig.~\ref{fig:dts_graph}. Here, $\hat{\mathcal{M}}=\{\mathcal{M}_{1},\mathcal{M}_{2},\dots,\mathcal{M}_{M}\}$
is a collection of node layers. Each layer $\mathcal{M}_{k}$ includes
all nodes in $\mathcal{M}$, representing a network snapshot at time
$t_{k}$. $\hat{\mathcal{E}}=\{\mathcal{E}_{1},\mathcal{E}_{2},\cdots,\mathcal{E}_{M-1}\}$
is a collection of directed edge sets. Each $\mathcal{E}_{k}=\{(m,n)\}_{m\in\mathcal{M}_{k},n\in\mathcal{M}_{k+1}}$contains
the directed edges from layer $\mathcal{M}_{k}$ to layer $\mathcal{M}_{k+1}$,
where the edges $(m,n)$ are strictly allowed only from a lower-layer
node $m\in\mathcal{M}_{k}$ to an upper-layer node $n\in\mathcal{M}_{k+1}$,
representing data flows from node $m$ to node $n$ during the interval
$[t_{k},t_{k+1})$. In addition, $\hat{\mathcal{W}}(\mathbf{t})=\{\mathbf{W}_{1}(\mathbf{t}),\mathbf{W}_{2}(\mathbf{t}),\mathbf{W}_{M-1}(\mathbf{t})\}$
is a collection of the weight matrices. Each matrix $\mathbf{W}_{k}(\mathbf{t})=\{w_{m,n}^{k}(t_{k},t_{k+1})\}_{(m,n)\in\mathcal{E}_{k}}$
includes the weight for the edges in $\mathcal{E}_{k}$, where $w_{m,n}^{k}(t_{k},t_{k+1})$
denotes the weight of edge $(m,n)$, representing the interference
cost incurred during the data transfer from node $m$ to node $n$
over the interval $[t_{k},t_{k+1})$.

\begin{figure}
\begin{centering}
\includegraphics{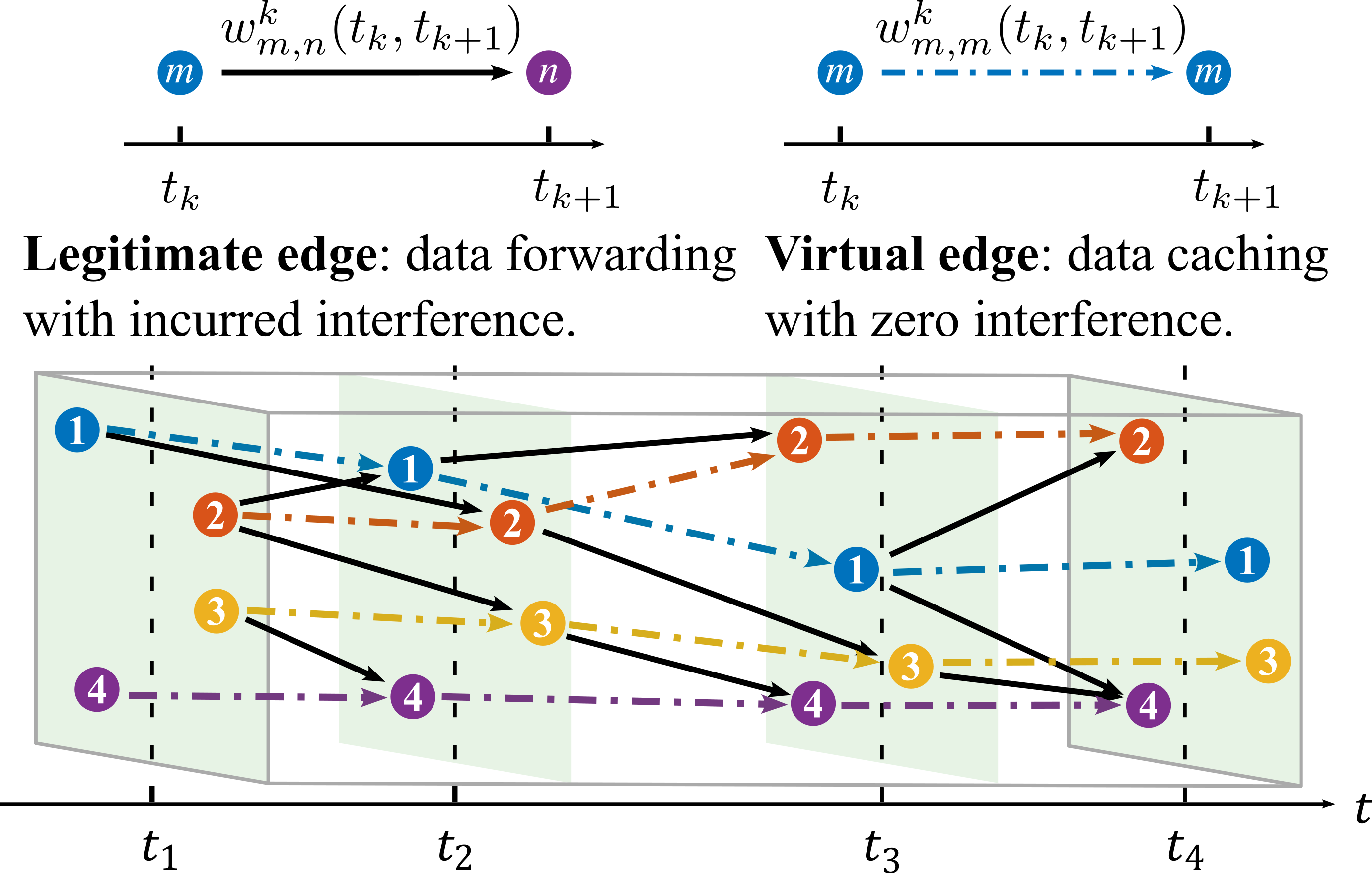}
\par\end{centering}
\caption{\label{fig:dts_graph}Dynamic space-time graph model with virtual
edges. Distinct nodes in adjacent layers are connected by legitimate
edges with weight $w_{m,n}^{k}(t_{k},t_{k+1})$, shown as solid lines.
Identical nodes in adjacent layers are connected by virtual edges
with zero weight, shown as dotted lines.}
\end{figure}

\subsubsection{Physical meaning}

Physically, edges $(m,n)$ with $m\neq n$, referred to as \emph{legitimate
edges} (solid lines in Fig.~\ref{fig:dts_graph}), represent forwarding
during $[t_{k},t_{k+1})$, whereas edges with $m=n$, referred to
as \emph{virtual edges} (dotted lines in Fig.~\ref{fig:dts_graph}),
represent caching over $[t_{k},t_{k+1})$. For legitimate edges $(m,n)$
with $m\neq n$, the weight $w_{m,n}^{k}(t_{k},t_{k+1})$ corresponds
to the optimal value obtained from solving Problem $\mathscr{P}1$.
In contrast, for virtual edges $(m,n)$ with $m=n$, the weight $w_{m,n}^{k}(t_{k},t_{k+1})$
is set to zero, as no actual transmission or interference occurs.

As a result, the path planning problem for multi-hop data transportation
corresponds to selecting a path that connects node 1 in $\mathcal{M}_{1}$
to node $M$ in $\mathcal{M}_{M}$ in the dynamic space-time graph
$\mathscr{G}(\mathbf{t})$. The optimal path is defined as the one
with the lowest cost, where the cost is given by
\begin{equation}
\vartheta\triangleq\max_{k\in\left\{ 1,2,\cdots,M-1\right\} }\left\{ w_{o\left(k\right),o\left(k+1\right)}^{k}\left(t_{k},t_{k+1}\right)\right\} \label{eq:def_net_itf}
\end{equation}
and $o\left(k\right)$ denotes the selected node (or relay) in the
$k$th layer along the path.

Note that the legitimate-edge weight $w_{m,n}^{k}(t_{k},t_{k+1})$,
as defined in Problem~$\mathscr{P}1$, captures the impact of node
velocity, mobility, and channel quality. As a result, the network-level
solution obtained via the dynamic space-time graph naturally accounts
for these essential physical-layer characteristics.

\subsubsection{Advantages}

The proposed dynamic space-time graph, by fixing the temporal dimension
$M$, reduces the optimization complexity. Furthermore, the incorporation
of virtual edges ensures that this simplification does not compromise
the solution's optimality. 
\begin{figure}
\begin{centering}
\includegraphics{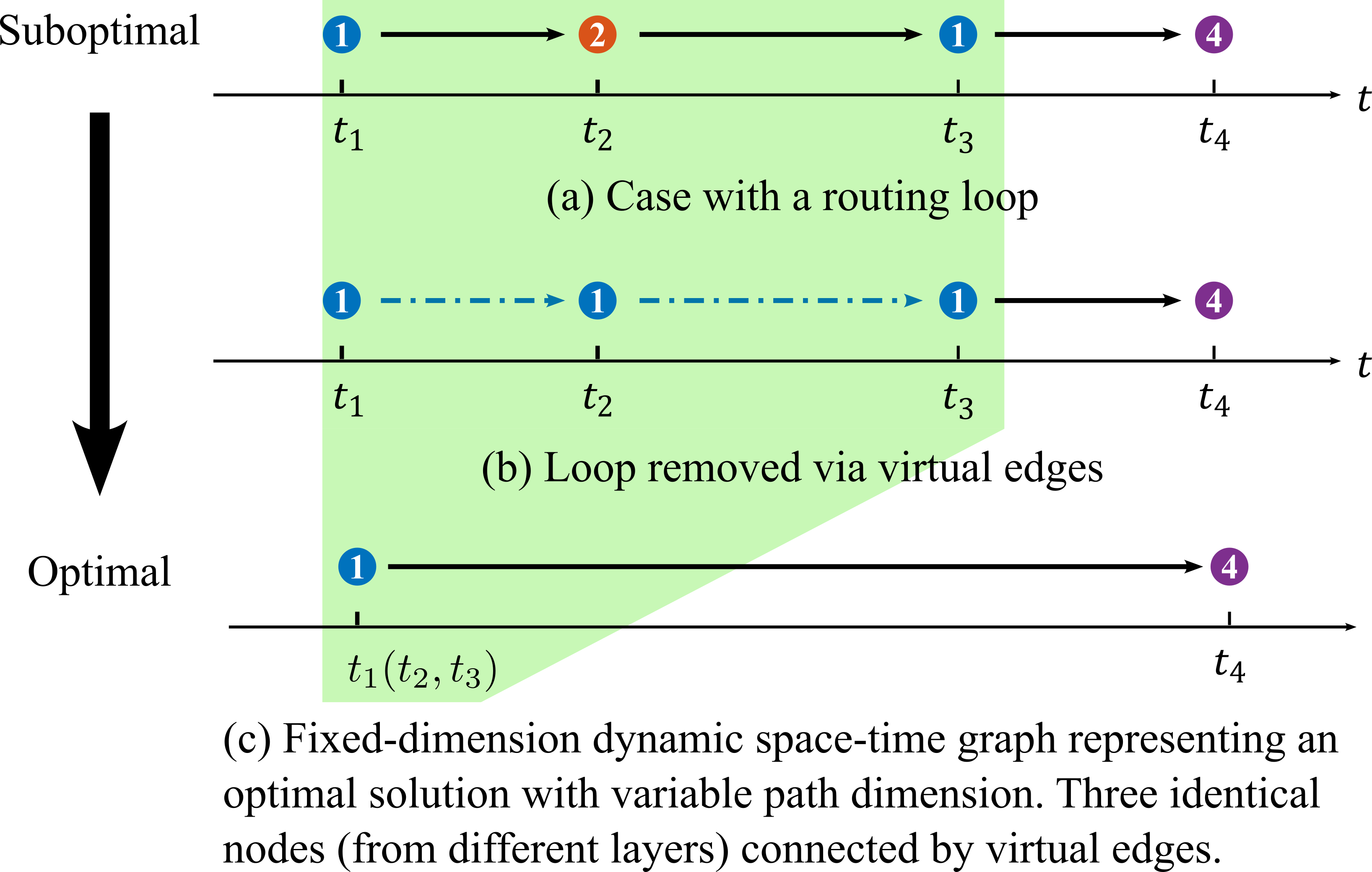}
\par\end{centering}
\caption{\label{fig:case_example_route}Progressive improvement of routing
using virtual edges in the dynamic space-time graph. Suppose the optimal
solution is the path $1\to4$ with time boundary $\mathbf{t}^{*}=(t_{1},t_{4})$.
The three cases illustrate how loop elimination and the use of virtual
edges enable a transition from a suboptimal to an optimal path under
a fixed graph dimension. Case (b) improves upon (a) by eliminating
the routing loop through cost-free virtual edges. Case (c) further
enhances performance by allocating more available time to reduce interference
and achieve the minimum-cost path.}
\end{figure}

First, unlike conventional space-time graph models that require joint
optimization over both the time boundary vector and its dimension,
our approach reduces the search space by restricting the problem to
optimizing only the individual time values $t_{k}$. Moreover, the
proposed model retains optimality. For an optimal solution $\mathbf{t}^{*}$
that has a dimension less than $M$, corresponding to less than $M-1$
hops, the dynamic space-time graph model provides an $M$-dimension
equivalent solution $\mathbf{t}^{\prime}$ by assigning some virtual
edges. For example, if the optimal route is $1\to4$, with time boundary
$\mathbf{t}^{*}=(t_{1},t_{4})$, then under $M=4$, the dynamic space-time
graph model can provide a fixed-dimension solution $1\to1\to1\to4$,
with $\mathbf{t}^{\prime}=(t_{1},t_{1},t_{1},t_{4})$, as shown in
Fig.~\ref{fig:case_example_route} (c). Since the virtual edge has
zero weight, {\em i.e.}, $w_{1,1}^{k}=0$, the fixed-dimension dynamic
space-time graph model yields the same minimum cost as the classical
model with the optimal graph dimension.

In addition, setting the graph dimension equal to the number of nodes
$M$ is sufficient, and the minimum required, to guarantee optimality.
This is because the graph has $M$ nodes, and thus, any path with
more than $M-1$ edges must form a loop; for a path with a loop that
passes the $m$th node twice, it is more efficient to simply stay
on the $m$th node for a longer time. For example, for a route $1\to2\to1\to4$,
with time boundary $(t_{1},t_{2},t_{3},t_{4})$, it is more efficient
to reduce the route to $1\to4$ with the time boundary $(t_{1},t_{4})$,
as shown in Fig.~\ref{fig:case_example_route}. As a result, setting
the graph dimension to $M$ is sufficient to capture all optimal paths.
In addition, the dimension should not be smaller than $M$ in order
to accommodate paths that consist of exactly $M-1$ edges.

\subsection{Problem Formulation}

Based on the dynamic space-time graph model, this paper focuses on
network-level optimization, aiming to optimize the time boundary $\mathbf{t}$
of the transportation of the data package and the route $\mathbf{o}\triangleq\{o(1),\cdots,o(M)\}$
from the source $o(1)=1$ to the destination $o(M)=M$ for $T$ time
ahead for the transmission that minimizes the worst-case interference
power leakage $\vartheta$ during full data transportation
\begin{align}
\mathscr{P}2:\quad\underset{\mathbf{o},\mathbf{t},\vartheta}{\text{minimize}} & \quad\vartheta\label{eq:obj_p1}\\
\text{subject to} & \quad w_{o(k),o(k+1)}^{k}\left(t_{k},t_{k+1}\right)\le\vartheta,\forall k\label{eq:c_intf_p1}\\
 & \quad o\left(1\right)=1,o(M)=M,o\left(k\right)\in\mathcal{M}\label{eq:c_relay_p1}\\
 & \quad0=t_{1}\le\cdots\le t_{M}=T\label{eq:c_ht_p1}
\end{align}
where constraint (\ref{eq:c_intf_p1}) ensures that the interference
power leakage of each relay is less than $\vartheta$ (also the epigraph
form of (\ref{eq:def_net_itf})), constraint (\ref{eq:c_relay_p1})
ensures that the data is transmitted from the source node $o(1)=1$
to the destination node $o(M)=M$ through the aerial communication
network $o(k)\in\mathcal{M}$, and (\ref{eq:c_ht_p1}) is the time
causality constraint, ensuring that data is fully transferred from
one node to another before it is forwarded to a third node.

\section{Graph-Based Cross-Layer Optimization for Single Commodity Transportation\label{sec:alg_single}}

Given the time boundary $\mathbf{t}$ and weights $w_{m,n}^{k}(t_{k},t_{k+1})$,
problem $\mathscr{P}2$ becomes a conventional minimax path problem,
where the optimal route $\mathbf{o}$ can be found by a bottleneck
path planning algorithm \cite{RanSet:C09,DuaLyuWuXie:C18}. Thus,
the key challenges are to find the weights $w_{m,n}^{k}(t_{k},t_{k+1})$
by efficiently solving the inner problem $\mathscr{P}1$ and find
the optimal time boundary $\mathbf{t}$.

First, solving $\mathscr{P}1$ involves optimizing the power allocation
policy $p_{m,n}(t)$ under the uncertainty of the future instantaneous
channel quality $h_{m,n}(t)$ in a horizon of $t\in(0,T)$. Mathematically,
this requires solving a problem with expectation without a closed-form
expression as in (\ref{eq:c_thp}). To address this, by analyzing
the optimality condition of $\mathscr{P}1$, we derive a closed-form
lower bound and approximations to explicitly evaluate the constraint
(\ref{eq:c_thp}), which facilitate the design of an efficient algorithm.

Second, solving for the time boundary $\mathbf{t}$ requires optimization
over a non-convex feasible set of $\{\mathbf{t},\vartheta\}$. We
discover a monotonicity of the time boundary over $\vartheta$, which
enables an efficient bisection search for the global optimum $\mathbf{t}$
over each route $\mathbf{o}$.

\subsection{Inner Problem: Power Allocation Policy\label{subsec:Inner-Problem}}

It is found that if the distribution of the channel $h_{m,n}(t)$
is available, the optimal power allocation policy $p_{m,n}(t)$ can
be found using Lagrangian methods.

By equivalently reformulating constraint (\ref{eq:c_itf}) of problem
$\mathscr{P}1$ as
\begin{equation}
p_{m,n}\left(t\right)\max_{j\in\mathcal{N}}\left\{ h_{m,j}\left(t\right)\right\} \le\vartheta,\forall t\in\left[t_{k},t_{k+1}\right)\label{eq:itf_c_rv1}
\end{equation}
$\mathscr{P}1$ can be equivalently reformulated as
\[
\underset{\left\{ p_{m,n}\left(t\right)\right\} ,\vartheta_{m,n}}{\text{minimize}}\vartheta_{m,n},\ \text{s.t. (\ref{eq:c_thp}) and (\ref{eq:itf_c_rv1}).}
\]
The reformulated problem is convex, since the objective function and
constraint (\ref{eq:itf_c_rv1}) are linear, and constraint (\ref{eq:c_thp})
is convex. Therefore, the \ac{kkt} conditions of the reformulated
problem are the sufficient optimality conditions of problem $\mathscr{P}1$
\cite{boyd2004convex}.

Let $\boldsymbol{\Lambda}=[\mathbf{v},\mu]$ be the Lagrangian set,
then the Lagrangian function of problem $\mathscr{P}1$ can be expressed
as 
\begin{align*}
 & L\left(p_{m,n}\left(t\right),\vartheta_{m,n},\boldsymbol{\Lambda}\right)=\\
 & \vartheta_{m,n}+\int_{t_{k}}^{t_{k+1}}v\left(t\right)\left(p_{m,n}\left(t\right)\max_{j\in\mathcal{N}}\left\{ h_{m,j}\left(t\right)\right\} -\vartheta_{m,n}\right)dt\\
 & +\mu\left(S-\int_{t_{k}}^{t_{k+1}}\mathbb{E}\left[B\log_{2}\left(1+\frac{p_{m,n}\left(t\right)h_{m,n}\left(t\right)}{\delta^{2}}\right)\right]dt\right).
\end{align*}
Let $(p_{m,n}^{*}(t),\vartheta_{m,n}^{*})$ be the optimal solution
to $\mathscr{P}1$ and $\boldsymbol{\Lambda}^{*}$ be the optimal
Lagrange multiplier for its dual problem. From the \ac{kkt} optimality
conditions, we derive that $(p_{m,n}^{*}(t),\vartheta_{m,n}^{*},\boldsymbol{\Lambda}^{*})$
should satisfy primal feasibility (\ref{eq:c_thp}) and (\ref{eq:itf_c_rv1}),
dual feasibility 
\begin{equation}
\mu\ge0,v\left(t\right)\ge0,\forall t\label{eq:KKT_c1}
\end{equation}
complementary slackness 
\begin{align}
 & v\left(t\right)\left(p_{m,n}\left(t\right)\max_{j\in\mathcal{N}}\left\{ h_{m,j}\left(t\right)\right\} -\vartheta_{m,n}\right)=0,\forall t\label{eq:KKT_c2}\\
 & \mu\left(S-\int_{t_{k}}^{t_{k+1}}\mathbb{E}\left[B\log_{2}\left(1+\frac{p_{m,n}\left(t\right)h_{m,n}\left(t\right)}{\delta^{2}}\right)\right]dt\right)=0\label{eq:KKT_c3}
\end{align}
and stationarity
\begin{align}
 & \frac{\partial L}{\partial\vartheta_{m,n}}=1-\int_{t_{k}}^{t_{k+1}}v\left(t\right)dt=0\label{eq:KKT_c4}\\
 & \frac{\partial L}{\partial p_{m,n}\left(t\right)}=v\left(t\right)\max_{j\in\mathcal{N}}\left\{ h_{m,j}\left(t\right)\right\} \nonumber \\
 & \quad\quad\quad\quad-\mu\mathbb{E}\left[\frac{B}{\ln2}\frac{h_{m,n}\left(t\right)}{\delta^{2}+p_{m,n}\left(t\right)h_{m,n}\left(t\right)}\right]=0,\forall t\label{eq:KKT_c5}
\end{align}

By analyzing the \ac{kkt} conditions (\ref{eq:c_thp}) and (\ref{eq:itf_c_rv1})\textendash (\ref{eq:KKT_c5}),
the optimal power allocation $p_{m,n}^{*}(t)$ can be found as follows.
\begin{prop}
\label{prop:power_allocation_policy}(Optimal solution to $\mathscr{P}1$)
Given the time boundaries, the optimal value of $\mathscr{P}1$ is
$w_{m,n}^{k}=\vartheta_{m,n}^{*}$, which is the solution to
\begin{equation}
\Upsilon_{m,n}\left(\vartheta_{m,n};t_{k},t_{k+1}\right)=S\label{eq:eqv_obj_p1}
\end{equation}
and $\Upsilon_{m,n}(\vartheta_{m,n},t_{k},t_{k+1})$ is defined as
\begin{multline}
\Upsilon_{m,n}\left(\vartheta_{m,n},t_{k},t_{k+1}\right)\triangleq\\
\int_{t_{k}}^{t_{k+1}}\mathbb{E}\left[B\log_{2}\left(1+\frac{\vartheta_{m,n}h_{m,n}\left(t\right)}{\max_{j\in\mathcal{N}}\left\{ h_{m,j}\left(t\right)\right\} \delta^{2}}\right)\right]dt.\label{eq:sum_rate_def}
\end{multline}
In addition, the optimal power allocation policy $p_{m,n}^{*}\left(t\right)$
is given by 
\begin{equation}
p_{m,n}^{*}\left(t\right)=\frac{\vartheta_{m,n}^{*}}{\max_{j\in\mathcal{N}}\left\{ h_{m,j}\left(t\right)\right\} }\label{eq:opt_p_0}
\end{equation}
\end{prop}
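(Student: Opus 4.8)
The plan is to work directly from the \ac{kkt} system (\ref{eq:KKT_c1})--(\ref{eq:KKT_c5}), which, as already observed, is sufficient for optimality because the reformulated problem is convex. The strategy proceeds in three stages: first show that the throughput multiplier $\mu$ is strictly positive; then show that the interference multiplier $v(t)$ is strictly positive for (almost) every $t$; and finally translate these two facts through complementary slackness into the claimed power policy (\ref{eq:opt_p_0}) and the scalar characterization (\ref{eq:eqv_obj_p1}).

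First I would establish $\mu>0$ by contradiction. If $\mu=0$, the stationarity condition (\ref{eq:KKT_c5}) collapses to $v(t)\max_{j\in\mathcal{N}}\{h_{m,j}(t)\}=0$ for all $t$. Since the interference channel gains are strictly positive, $\max_{j}\{h_{m,j}(t)\}>0$, forcing $v(t)=0$ almost everywhere and contradicting the normalization $\int_{t_k}^{t_{k+1}}v(t)\,dt=1$ from (\ref{eq:KKT_c4}). Hence $\mu>0$, and complementary slackness (\ref{eq:KKT_c3}) then forces the throughput constraint (\ref{eq:c_thp}) to hold with equality.

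Next, with $\mu>0$ in hand, I would solve (\ref{eq:KKT_c5}) for the multiplier:
\begin{equation*}
v(t)=\frac{\mu}{\max_{j\in\mathcal{N}}\{h_{m,j}(t)\}}\,\mathbb{E}\!\left[\frac{B}{\ln 2}\frac{h_{m,n}(t)}{\delta^{2}+p_{m,n}(t)h_{m,n}(t)}\right].
\end{equation*}
The expectation is strictly positive because its integrand is positive and $h_{m,n}(t)>0$ almost surely under the Gamma fading model; hence $v(t)>0$ for every $t$. Feeding this into the complementary slackness condition (\ref{eq:KKT_c2}) forces the interference constraint (\ref{eq:itf_c_rv1}) to be active at every instant, i.e.\ $p_{m,n}^{*}(t)\max_{j}\{h_{m,j}(t)\}=\vartheta_{m,n}^{*}$, which is precisely the policy (\ref{eq:opt_p_0}). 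Substituting (\ref{eq:opt_p_0}) into the now-tight throughput constraint and matching it against the definition (\ref{eq:sum_rate_def}) yields $\Upsilon_{m,n}(\vartheta_{m,n}^{*};t_k,t_{k+1})=S$, establishing (\ref{eq:eqv_obj_p1}). A short remark that $\Upsilon_{m,n}$ is continuous and strictly increasing in $\vartheta_{m,n}$, ranging from $0$ to $\infty$, guarantees this root exists and is unique, so (\ref{eq:eqv_obj_p1}) well-defines $w_{m,n}^{k}$.

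The step I expect to be the main obstacle is upgrading $v(t)>0$ from \emph{somewhere} to \emph{everywhere}: this is exactly what turns ``the interference constraint is active on a positive-measure set'' into ``the interference constraint is active at every instant,'' and it is this pointwise tightness that delivers the clean closed form (\ref{eq:opt_p_0}). The argument hinges on the strict positivity of the conditional expectation in (\ref{eq:KKT_c5}), so the care lies in confirming the fading distribution never collapses the integrand to zero. A secondary technical point is justifying the differentiation under the expectation and integral signs when forming (\ref{eq:KKT_c5}); I would dispatch this by a routine dominated-convergence argument, since the integrand is bounded above by $Bh_{m,n}(t)/(\ln 2\,\delta^{2})$, whose Gamma mean is finite.
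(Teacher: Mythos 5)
Your proof is correct and follows essentially the same KKT-based argument as the paper: both establish strict positivity of the multipliers $\mu$ and $v(t)$ from the stationarity condition (\ref{eq:KKT_c5}) and the normalization (\ref{eq:KKT_c4}), then invoke complementary slackness (\ref{eq:KKT_c2})--(\ref{eq:KKT_c3}) to make both constraints tight, which yields the policy (\ref{eq:opt_p_0}) and the scalar equation (\ref{eq:eqv_obj_p1}). The only cosmetic difference is the order of the positivity argument---you rule out $\mu=0$ directly, whereas the paper first shows $v(t)>0$ at some $t$ and then deduces $\mu>0$ before returning to $v(t)>0$ for all $t$---which does not change the substance.
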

\begin{proof}
See Appendix~\ref{sec:proof_power_allocation_policy}.
\end{proof}
It is observed that $\Upsilon_{m,n}(\vartheta_{m,n};t_{k},t_{k+1})$
in (\ref{eq:sum_rate_def}) is strictly increasing over $\vartheta_{m,n}$.
Therefore, the optimal parameter $\vartheta_{m,n}$ to satisfy (\ref{eq:eqv_obj_p1})
can be found using bisection search and the optimal solution is unique.

\subsection{Deterministic Bound}

The challenge of solving (\ref{eq:eqv_obj_p1}) is the efficient evaluation
of the expectation, which does not have a closed-form expression.
In the following proposition, we derive a lower bound for the expected
channel capacity as the integrand in (\ref{eq:sum_rate_def}).
\begin{prop}
\label{prop:lb_c}(A deterministic capacity lower bound) The expected
capacity in (\ref{eq:sum_rate_def}) 
\[
\mathbb{E}\left[B\log_{2}\left(1+\frac{\vartheta_{m,n}h_{m,n}\left(t\right)}{\max_{j\in\mathcal{N}}\left\{ h_{m,j}\left(t\right)\right\} \delta^{2}}\right)\right]
\]
is lower bounded by
\begin{align}
\underline{c}_{m,n}\left(t\right) & \triangleq\log_{2}\left(1+\frac{\vartheta g_{m,n}\left(t\right)}{\left(\max_{j\in\mathcal{N}}\left\{ g_{m,j}\left(t\right)\right\} +\alpha\omega_{m}\left(t\right)\right)\delta^{2}}\right)\nonumber \\
 & \quad-\epsilon_{m,n}\left(t\right)\label{eq:c_l_b}
\end{align}
where $\alpha=1$, $\omega_{m}(t)=\sqrt{\sum_{j\in\mathcal{N}}g_{m,j}(t)^{2}/\kappa_{m,j}(t)}$,
and $\epsilon_{m,n}(t)=\log_{2}(e)/\kappa_{m,n}(t)-\log_{2}(1+(2\kappa_{m,n}(t))^{-1})$.
\end{prop}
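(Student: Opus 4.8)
The plan is to separate the randomness of the \emph{signal} channel $h_{m,n}(t)$ from that of the \emph{interference} channels $\{h_{m,j}(t)\}_{j\in\mathcal{N}}$, which are independent across distinct links, and to bound the expectation in two stages: first the denominator $D\triangleq\max_{j\in\mathcal{N}}\{h_{m,j}(t)\}$, then the numerator $h_{m,n}(t)$. Throughout I use $\mathbb{E}[h_{m,n}]=g_{m,n}$, $\mathbb{E}[h_{m,j}]=g_{m,j}$ and $\mathrm{Var}[h_{m,j}]=g_{m,j}^2/\kappa_{m,j}$, so that $\omega_m=\sqrt{\sum_{j}\mathrm{Var}[h_{m,j}]}$. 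The bandwidth factor $B$ multiplies both sides, so I bound the $\log_2(\cdot)$ term and reinstate $B$ at the end; the stated $\underline{c}_{m,n}$ corresponds to the per-unit-bandwidth rate.

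For the denominator, I would note that for fixed $h_{m,n}$ the map $D\mapsto\log_2(1+\vartheta h_{m,n}/(D\delta^2))$ is convex and decreasing on $D>0$. Conditioning on $h_{m,n}$ and applying Jensen's inequality in $D$ replaces the random $D$ by its mean, giving a lower bound with $\mathbb{E}[D]$ in the denominator. I then upper bound $\mathbb{E}[\max_j h_{m,j}]$: from $\max_j h_{m,j}\le \max_j g_{m,j}+\max_j(h_{m,j}-g_{m,j})$ and $\max_j(h_{m,j}-g_{m,j})\le \sqrt{\sum_j (h_{m,j}-g_{m,j})^2}$, Jensen for the concave square root yields $\mathbb{E}[\max_j h_{m,j}]\le \max_j g_{m,j}+\sqrt{\sum_j \mathrm{Var}[h_{m,j}]}=\max_j g_{m,j}+\omega_m$, which is exactly the $\alpha=1$ denominator in the claim. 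Since the integrand is decreasing in the denominator, substituting this upper bound preserves the lower bound.

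For the numerator, the problem reduces to lower bounding $\mathbb{E}_\xi[\log_2(1+b\xi)]$, where $b\triangleq \vartheta g_{m,n}/((\max_j g_{m,j}+\omega_m)\delta^2)$ and $\xi=\xi_{m,n}\sim\mathrm{Gamma}(\kappa,1/\kappa)$ with $\mathbb{E}[\xi]=1$. The key observation is that the Jensen gap is \emph{uniformly} controlled over $b$: setting $F(b)\triangleq \mathbb{E}[\ln(1+b\xi)]-\ln(1+b)$, one computes $F'(b)=\mathbb{E}[\xi/(1+b\xi)]-1/(1+b)$, and convexity of $z\mapsto 1/(1+bz)$ with $\mathbb{E}[\xi]=1$ gives $\mathbb{E}[\xi/(1+b\xi)]=\tfrac1b(1-\mathbb{E}[1/(1+b\xi)])\le \tfrac1b(1-1/(1+b))=1/(1+b)$, so $F'(b)\le 0$. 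Hence $F$ is nonincreasing and $F(b)\ge \lim_{b\to\infty}F(b)=\mathbb{E}[\ln\xi]$, i.e. $\mathbb{E}[\log_2(1+b\xi)]\ge \log_2(1+b)+\mathbb{E}[\log_2\xi]$ for every $b>0$. Comparing with the claim, it remains to show $-\mathbb{E}[\log_2\xi]\le \epsilon_{m,n}$, equivalently (in nats, using $\mathbb{E}[\ln\xi]=\psi(\kappa)-\ln\kappa$) the digamma inequality $\psi(\kappa)\ge \ln(\kappa+\tfrac12)-\tfrac1\kappa$.

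I expect this last digamma inequality to be the main obstacle, since the elementary bound $\mathbb{E}[\ln\xi]\ge 1-\mathbb{E}[1/\xi]$ is too loose to reach $\epsilon_{m,n}$. My plan is a telescoping argument: with $\Delta(\kappa)\triangleq \psi(\kappa)-\ln(\kappa+\tfrac12)+\tfrac1\kappa$ and the recurrence $\psi(\kappa+1)=\psi(\kappa)+1/\kappa$, a short computation gives $\Delta(\kappa)-\Delta(\kappa+1)=\ln(1+1/u)-2/(2u+1)$ with $u=\kappa+\tfrac12$, which is nonnegative because the logarithmic mean is dominated by the arithmetic mean: $\ln\frac{u+1}{u}=1/L(u+1,u)\ge 2/((u+1)+u)$. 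Thus $\Delta$ is nonincreasing under unit shifts, and since $\psi(\kappa)=\ln\kappa-\tfrac1{2\kappa}+O(\kappa^{-2})$ gives $\Delta(\kappa)\to 0$ as $\kappa\to\infty$, I conclude $\Delta(\kappa)\ge 0$, which is the required inequality and completes the bound.
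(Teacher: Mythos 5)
Your proof is correct, and for the interference term it essentially coincides with the paper's own argument: both use subadditivity of the max plus a variance/Jensen step to obtain $\mathbb{E}[\max_{j}h_{m,j}(t)]\le\max_{j}g_{m,j}(t)+\omega_{m}(t)$, and then Jensen's inequality (conditionally on $h_{m,n}(t)$, which tacitly requires independence of the signal fade from the interference fades in both proofs) to move the expectation into the denominator. The genuine difference is the signal term: the paper disposes of the step $\mathbb{E}[\log_2(1+b\xi)]\ge\log_2(1+b)-\epsilon_{m,n}(t)$ by citing \cite[Lemma 1]{LiChe:J24b}, whereas you prove it from scratch. Your argument is sound: $F(b)=\mathbb{E}[\ln(1+b\xi)]-\ln(1+b)$ is nonincreasing because $\mathbb{E}[\xi/(1+b\xi)]=\tfrac{1}{b}\left(1-\mathbb{E}[1/(1+b\xi)]\right)\le 1/(1+b)$ by Jensen, hence $F(b)\ge\lim_{b\to\infty}F(b)=\mathbb{E}[\ln\xi]=\psi(\kappa)-\ln\kappa$ (the limit interchange is justified by dominated convergence with dominating function $|\ln\xi|$), and the remaining digamma inequality $\psi(\kappa)\ge\ln(\kappa+\tfrac{1}{2})-\tfrac{1}{\kappa}$ follows from your telescoping argument, which is valid for every real $\kappa>0$ since $\psi(\kappa+1)=\psi(\kappa)+1/\kappa$ holds there and $\Delta(\kappa)\to0$ by the expansion $\psi(\kappa)=\ln\kappa-\tfrac{1}{2\kappa}+O(\kappa^{-2})$. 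What your route buys: the proposition becomes self-contained instead of resting on an external reference, and your intermediate bound $\mathbb{E}[\log_2(1+b\xi)]\ge\log_2(1+b)+\mathbb{E}[\log_2\xi]$ is uniform in the SNR parameter $b$ and strictly tighter than the stated $\epsilon_{m,n}$ form (your digamma inequality is exactly the statement $-\mathbb{E}[\log_2\xi]\le\epsilon_{m,n}$), which also makes transparent why the gap closes as $\kappa\to\infty$. What the paper's route buys is only brevity. One cosmetic remark: the $B$-factor and the $\vartheta$ versus $\vartheta_{m,n}$ mismatches that you silently normalize away are inconsistencies in the proposition statement itself, so your reading is the intended one.
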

\begin{proof}
See Appendix~\ref{sec:proof:lb_c}.
\end{proof}
The gap between the expected capacity $\mathbb{E}[c_{m,n}(t)]$ and
its lower bound $\underline{c}_{m,n}\left(t\right)$ arises from the
positive $\omega_{m}(t)$ and $\epsilon_{m,n}(t)$, and decreases
with $\kappa$ increasing, tending to $0$ when the parameter $\kappa$
in the Gamma distribution of small-scale fading in (\ref{eq:channel_model})
goes to infinity in the \ac{los} case, as shown in Fig.~\ref{fig:lower_bound}.

\begin{figure}
\begin{centering}
\includegraphics[width=0.9\columnwidth]{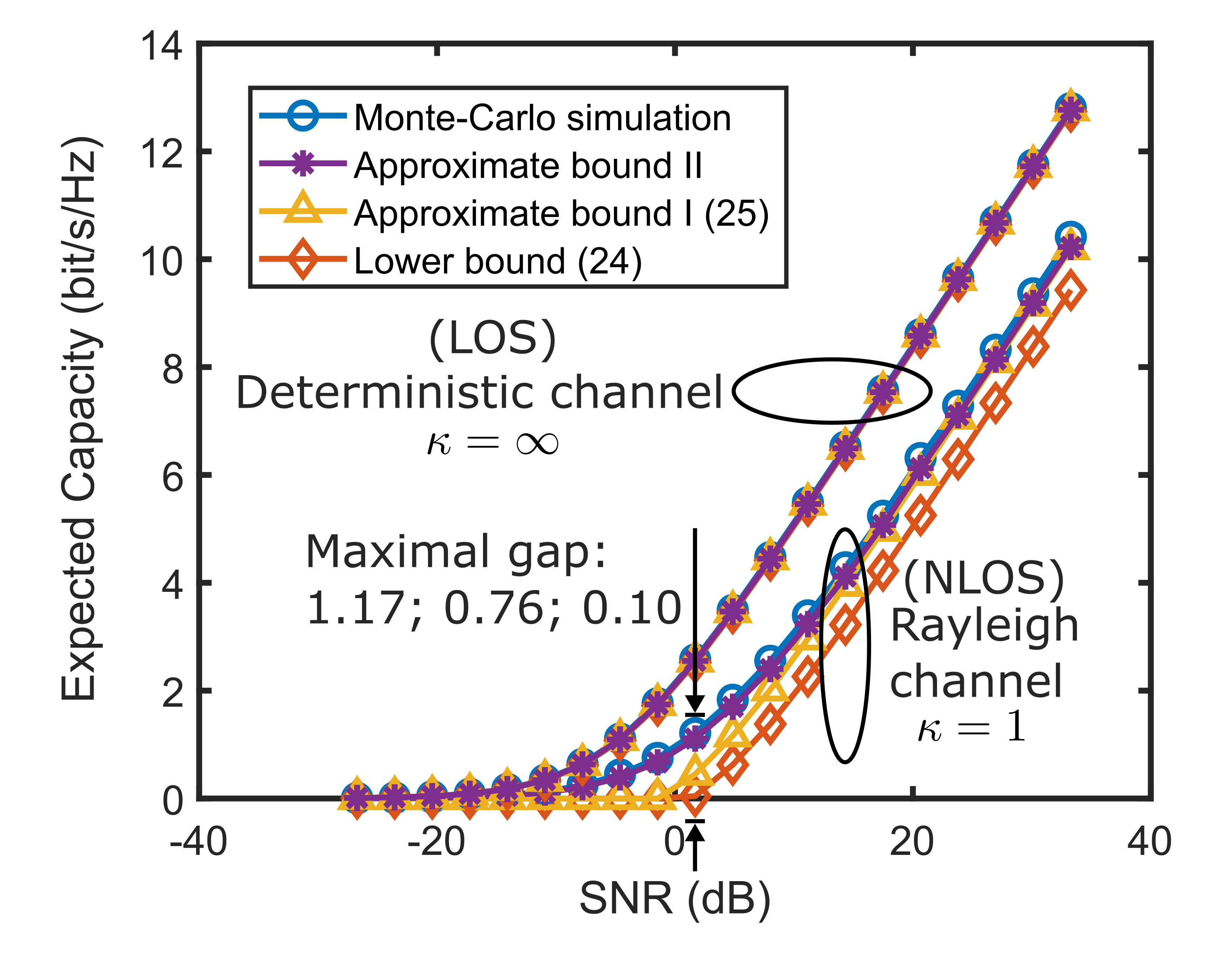}
\par\end{centering}
\caption{\label{fig:lower_bound}Comparison of different capacity lower bounds.
The group with $\kappa=\infty$ corresponds to a deterministic channel,
while the group with $\kappa=1$ corresponds to a Rayleigh fading
channel. The maximum gaps occur at $\kappa=1$, where the differences
between the Monte Carlo result and the lower bound (\ref{eq:c_l_b}),
approximate bound I, and approximate bound II are 1.17, 0.076, and
0.10\,dB, respectively.}
\end{figure}

It is observed that the lower bound (\ref{eq:c_l_b}) is tight in
the \ac{los} case when $\kappa\to\infty$. In the \ac{nlos} case,
there is a 0.43 dB gap in the high \ac{snr} regime, and gap becomes
larger in the low \ac{snr} regime. To find a tighter approximation,
we derive two expressions as follows.

First, we numerically found that setting $\alpha=1/2$ in (\ref{eq:c_l_b})
yields a tighter bound, referred as Approximate bound I, 
\begin{equation}
\log_{2}\left(1+\frac{\vartheta g_{m,n}\left(t\right)}{\left(\max_{j\in\mathcal{N}}\left\{ g_{m,j}\left(t\right)\right\} +\frac{1}{2}\omega_{m}\left(t\right)\right)\delta^{2}}\right)-\epsilon_{m,n}\left(t\right)\label{eq:aprox_bound_1}
\end{equation}
which appears to still be a lower bound over all $\kappa$ we tested.
As shown in Fig.~\ref{fig:lower_bound}, this approximate bound aligns
closely with the Monte Carlo empirical value in high \ac{snr} scenarios.

Second, to obtain a tighter bound in the low SNR regime, we employ
Jensen's inequality sharpening techniques \cite{LiaBer:J19} and obtain
Approximate Bound II as $f(\gamma;\kappa)$, where

\[
\gamma=\frac{\vartheta g_{m,n}\left(t\right)}{\left(\max_{j\in\mathcal{N}}\left\{ g_{m,j}\left(t\right)\right\} +\frac{1}{2}\omega_{m}\left(t\right)\right)\delta^{2}}
\]
and 
\begin{equation}
f\left(\gamma;\kappa\right)\triangleq\mathbb{E}\left[\log_{2}\left(1+\gamma\xi\right)\right]\label{eq:aprox_bound_2}
\end{equation}
in which $\xi\sim\text{Gamma}(\kappa,1/\kappa)$. Note that the function
$f(\gamma;\kappa)$, which has been substantially simplified from
(\ref{eq:sum_rate_def}), can be computed offline and stored in a
table. Hence, Approximate Bound II can be computed efficiently in
closed-form with table lookup.

Simulation results in Fig.~\ref{fig:lower_bound} demonstrate that
Approximate Bound II consistently aligns with Monte Carlo simulations,
from low to high \ac{snr} conditions.\footnote{Approximate Bound I also aligns with Monte Carlo simulations when
the targeted transmission channel is LOS (regardless of the interfering
channels), because in this case, Approximate Bound I is exactly equal
to Approximate Bound II, as shown in Appendix~\ref{sec:proof:lb_c}.}

\subsection{Outer Problem: Time Boundary Optimization}

The time boundary optimization finds the optimal time boundary $\mathbf{t}=(t_{1},t_{2},\cdots,t_{M})$
in $\mathscr{P}2$ with a given route $\mathbf{o}$ and the weights
$w_{m,n}^{k}$ given in Proposition~\ref{prop:power_allocation_policy}.
It can be verified that the feasible set of $\mathbf{t}$ defined
by the constraint (\ref{eq:c_intf_p1}) is non-convex. Despite the
non-convexity, we will show that the optimal $t_{k}$'s can be uniquely
determined by $t_{M}$, for $k<M$, and the objective value $\vartheta$
in $\mathscr{P}2$ is monotonic in $t_{M}$. This result can lead
to a globally optimal solution to $\mathbf{t}$.
\begin{prop}
\label{prop:opt_c_p2_over_o}(Optimality condition to $\mathscr{P}2$)
Given the route variable $\mathbf{o}$, $\{\mathbf{t}^{*},\vartheta^{*}\}$
is the optimal solution to problem $\mathscr{P}2$ if and only if
$\vartheta^{*}=w_{o(k),o(k+1)}^{k}(t_{k}^{*},t_{k+1}^{*})$ for all
$k\in\{1,\cdots,M-1\}$.
\end{prop}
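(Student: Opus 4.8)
The plan is to reduce the characterization to a monotone propagation argument along the chain of edges, driven by two elementary monotonicity facts about the edge weight, which I abbreviate $w^k(t_k,t_{k+1})$ for $w_{o(k),o(k+1)}^{k}(t_k,t_{k+1})$. First I would record that, since $\Upsilon_{m,n}(\vartheta;t_k,t_{k+1})$ in (\ref{eq:sum_rate_def}) is the integral of a strictly positive integrand over $[t_k,t_{k+1})$ and is strictly increasing in $\vartheta$, the defining relation $\Upsilon_{m,n}(w^k;t_k,t_{k+1})=S$ in (\ref{eq:eqv_obj_p1}) makes $w^k$ \emph{strictly increasing} in the start time $t_k$ and \emph{strictly decreasing} in the end time $t_{k+1}$: a longer transmission window needs less worst-case interference to deliver the same $S$. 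Implicit differentiation makes this precise, with $dw^k/dt_{k+1}=-(\text{integrand at }t_{k+1})/\partial_\vartheta\Upsilon<0$ and $dw^k/dt_{k}=+(\text{integrand at }t_{k})/\partial_\vartheta\Upsilon>0$, the positivity holding whenever the intended-link gain $g_{m,n}(t)>0$.

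Next I would introduce the \emph{equal-weight propagation}: for a target level $\vartheta$, set $\tau_1(\vartheta)=0$ and solve $w^k(\tau_k(\vartheta),\tau_{k+1}(\vartheta))=\vartheta$ successively for $k=1,\dots,M-1$, which is well defined and unique because $w^k$ is strictly monotone in $t_{k+1}$. A chain-rule induction then shows each $\tau_k(\vartheta)$, and hence the terminal time $\tau_M(\vartheta)$, is continuous and strictly decreasing in $\vartheta$ (this is the monotonicity of the time boundary over $\vartheta$ referenced before Proposition~\ref{prop:opt_c_p2_over_o}), so a unique $\vartheta^*$ satisfies $\tau_M(\vartheta^*)=T$. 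The core of the argument is a comparison lemma: for any feasible $\mathbf{t}$ with $t_1=0$ and $\max_k w^k(t_k,t_{k+1})\le\vartheta$, an induction on $k$ using both monotonicities gives $t_k\ge\tau_k(\vartheta)$ for all $k$; furthermore, if some edge $j$ is strictly slack, i.e. $w^j(t_j,t_{j+1})<\vartheta$, then the inequality turns strict from $j+1$ onward, so $t_M>\tau_M(\vartheta)$.

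With these pieces the two directions follow. For necessity, let $\{\mathbf{t}^*,\vartheta^*\}$ be optimal; by the epigraph structure of $\mathscr{P}2$ we have $\vartheta^*=\max_k w^k(t_k^*,t_{k+1}^*)$, and the comparison lemma with $t_M^*=T$ forces $\tau_M(\vartheta^*)\le T$. Were any edge strictly slack, the strict form of the lemma would give $T=t_M^*>\tau_M(\vartheta^*)$; continuity and strict monotonicity of $\tau_M$ would then supply some $\vartheta'<\vartheta^*$ with $\tau_M(\vartheta')=T$, an achievable objective strictly below $\vartheta^*$, contradicting optimality. Hence no edge is slack and $\vartheta^*=w^k(t_k^*,t_{k+1}^*)$ for every $k$. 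For sufficiency, if all weights equal $\vartheta^*$ with $t_1=0$ and $t_M=T$, then $\mathbf{t}^*$ coincides with the propagation $(\tau_k(\vartheta^*))$ and $\tau_M(\vartheta^*)=T$; since the comparison lemma shows every feasible $\mathbf{t}$ with objective $\vartheta$ obeys $\tau_M(\vartheta)\le T$, i.e. $\vartheta\ge\vartheta^*$, the level $\vartheta^*$ is the minimum, so $\{\mathbf{t}^*,\vartheta^*\}$ is optimal.

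I expect the main obstacle to be the strict part of the comparison lemma, namely propagating a strict gap past possibly several consecutive slack edges while only the weak monotonicities are available end-to-end; the inductive step must invoke the strict increase of $w^k$ in $t_k$ to convert a strict gap at one boundary into a strict gap at the next. A secondary technical point is justifying the endpoint behavior of $\tau_M(\vartheta)$ (that $\tau_M(\vartheta)\to 0$ as $\vartheta\to\infty$ and grows without bound as $\vartheta\to 0^+$) so that $\tau_M(\vartheta^*)=T$ has a unique root; this again rests on the positivity and integrability of the capacity integrand in (\ref{eq:sum_rate_def}).
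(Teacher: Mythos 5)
Your proof is correct, and it rests on the same foundation as the paper's: your two monotonicity facts are exactly Lemma~1 in Appendix~C, and your propagation $\tau_k(\vartheta)$ is the construction $t_k(\vartheta)$ that the paper builds from Propositions~1 and~3 and studies in Appendix~D. The direction \emph{equal weights $\Rightarrow$ optimal} is essentially identical in both treatments: any feasible $\{\mathbf{t}',\vartheta'\}$ with $\vartheta'<\vartheta^{*}$ forces $t_k'>t_k^{*}$ by forward induction, and the last edge, pinned at $t_M=T$, then violates its constraint. (In fact your comparison lemma---a feasible time vector dominates the propagation, with strictness propagating past slack edges---is a sharper statement of what the paper invokes there by citing Appendix~D, which literally compares only two propagations at different $\vartheta$ levels.) Where you genuinely diverge is the converse, \emph{optimal $\Rightarrow$ equal weights}. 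The paper argues by local redistribution: if edge $k$ is slack, shrink its window ($t_k'>t_k^{*}$, $t_{k+1}'<t_{k+1}^{*}$) while keeping its weight below $\vartheta^{*}$; this strictly relaxes both neighbors, and an outward induction produces a time vector whose every weight is strictly below $\vartheta^{*}$, contradicting optimality. You instead go global: the strict comparison lemma gives $T=t_M^{*}>\tau_M(\vartheta^{*})$, and continuity plus strict monotonicity of $\tau_M$ yield a feasible $\vartheta'<\vartheta^{*}$. Your route is more unified---one lemma drives both directions and delivers Proposition~4 as a byproduct---but, as you stated it (demanding $\tau_M(\vartheta')=T$ exactly), it leans on the endpoint behavior of $\tau_M$ as $\vartheta\to0^{+}$, which you flag as an obstacle. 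That obstacle is removable: feasibility only requires $\tau_M(\vartheta')\le T$, since you may keep $t_M=T$ and let the last window exceed what the propagation needs, which only lowers its weight; continuity of $\tau_M$ near $\vartheta^{*}$ then suffices. The paper's perturbation argument buys locality---it never needs any statement about the range of $\tau_M$---at the price of a more intricate two-sided induction.
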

\begin{proof}
See Appendix~\ref{sec:proof_opt_c_p2_over_o}.
\end{proof}
Proposition~\ref{prop:opt_c_p2_over_o} states that, as a sufficient
and necessary optimality condition, all the links should achieve the
same interference leakage level $w_{m,n}^{k}=\vartheta^{*}$ along
the route $\mathbf{o}$. The intuition is that if the $k$th link
along the route has a smaller interference leakage power than that
of the $(k+1)$th link, then a portion of resources of the $k$th
link can be reallocated to the $(k+1)$th link by adjusting the variable
$t_{k+1}$, and hence, the solution is not optimal.

Using Propositions~\ref{prop:power_allocation_policy} and \ref{prop:opt_c_p2_over_o},
we can express the optimal time boundary $\bm{t}^{*}$ as a function
of the power leakage $\vartheta$ as follows. First, we note that
$t_{1}=0$ must be the optimal solution according to constraint (\ref{eq:c_ht_p1})
in $\mathscr{P}2$. Second, given $t_{k}$ for $k\leq M-1$, the optimal
$t_{k+1}$ can be obtained as the solution to $\Upsilon_{m,n}(\vartheta;t_{k},t_{k+1})=S$
in (\ref{eq:eqv_obj_p1}). Such an expression implies that all the
links are to achieve the same power leakage $\vartheta$. Then, using
induction, the last time boundary $t_{M}(\vartheta)$ can be obtained.
Finally, by the definition of $\mathscr{P}2$, if $t_{M}(\vartheta)=T$,
then $\vartheta$ must be the optimal solution; and if $t_{M}(\vartheta)\neq T$,
$\vartheta$ is not the solution, because the constraint (\ref{eq:c_ht_p1})
in $\mathscr{P}2$ is violated.

In case $t_{M}(\vartheta)\neq T$, we find that $t_{M}(\vartheta)$
is monotonic.
\begin{prop}
\label{prop:c_increasing_over_vartheta}(Monotonicity of $t_{M}(\vartheta)$)
The function $t_{M}(\vartheta)$ is strictly decreasing over $\vartheta$.
\end{prop}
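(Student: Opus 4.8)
The plan is to exploit the recursive construction of $\mathbf{t}(\vartheta)$ described just before the statement, together with two elementary monotonicity properties of $\Upsilon_{m,n}$, and to close the argument by induction on the hop index $k$. Writing the integrand of $\Upsilon_{m,n}$ in (\ref{eq:sum_rate_def}) as
\[
\phi_{m,n}(\vartheta,t)\triangleq\mathbb{E}\left[B\log_{2}\left(1+\frac{\vartheta\, h_{m,n}(t)}{\max_{j\in\mathcal{N}}\{h_{m,j}(t)\}\,\delta^{2}}\right)\right],
\]
I would first record that (i) $\phi_{m,n}(\vartheta,t)>0$ for every $\vartheta>0$, since the argument of the logarithm exceeds $1$; and (ii) $\phi_{m,n}(\vartheta,t)$ is strictly increasing in $\vartheta$, as differentiating inside the expectation yields a strictly positive integrand. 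Consequently $\Upsilon_{m,n}(\vartheta;t_{k},t_{k+1})=\int_{t_{k}}^{t_{k+1}}\phi_{m,n}(\vartheta,t)\,dt$ is strictly increasing both in its upper limit $t_{k+1}$ (by (i)) and in $\vartheta$ (by (ii), the latter already noted after Proposition~\ref{prop:power_allocation_policy}). I would also briefly confirm well-posedness: for each fixed $t_{k}$, the map $\Upsilon_{o(k),o(k+1)}(\vartheta;t_{k},\cdot)$ is continuous, strictly increasing, and ranges upward from $0$, so each root $t_{k+1}(\vartheta)$ of (\ref{eq:eqv_obj_p1}) exists and is unique, and the recursion is well defined throughout.

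Next I would fix the route $\mathbf{o}$, take $\vartheta_{1}<\vartheta_{2}$, recall $t_{1}(\vartheta)=0$, and claim that $t_{k}(\vartheta_{2})<t_{k}(\vartheta_{1})$ for every $k\ge2$, to be proved by induction on $k$. For the base case $k=2$, both boundaries start from $t_{1}=0$, so $t_{2}(\vartheta)$ solves $\Upsilon_{o(1),o(2)}(\vartheta;0,t_{2})=S$; evaluating at $t_{2}=t_{2}(\vartheta_{1})$ and using strict monotonicity in $\vartheta$ gives $\Upsilon_{o(1),o(2)}(\vartheta_{2};0,t_{2}(\vartheta_{1}))>S$, whence strict monotonicity in the upper limit forces $t_{2}(\vartheta_{2})<t_{2}(\vartheta_{1})$.

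For the inductive step I would argue by contradiction: assume $t_{k}(\vartheta_{2})<t_{k}(\vartheta_{1})$ but $t_{k+1}(\vartheta_{2})\ge t_{k+1}(\vartheta_{1})$. Then the integration interval for $\vartheta_{2}$ contains that for $\vartheta_{1}$, and, abbreviating $\phi\triangleq\phi_{o(k),o(k+1)}$,
\[
S=\int_{t_{k}(\vartheta_{2})}^{t_{k+1}(\vartheta_{2})}\phi(\vartheta_{2},t)\,dt\ \ge\ \int_{t_{k}(\vartheta_{1})}^{t_{k+1}(\vartheta_{1})}\phi(\vartheta_{2},t)\,dt\ >\ \int_{t_{k}(\vartheta_{1})}^{t_{k+1}(\vartheta_{1})}\phi(\vartheta_{1},t)\,dt=S,
\]
a contradiction; the first inequality uses positivity of the integrand over the enlarged interval (property (i)), and the strict inequality uses $\phi(\vartheta_{2},\cdot)>\phi(\vartheta_{1},\cdot)$ over an interval of positive length (property (ii), with $t_{k+1}(\vartheta_{1})>t_{k}(\vartheta_{1})$ because $S>0$). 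Hence $t_{k+1}(\vartheta_{2})<t_{k+1}(\vartheta_{1})$, completing the induction; setting $k+1=M$ yields that $t_{M}(\vartheta)$ is strictly decreasing.

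The part requiring the most care is precisely this compounding effect: because the starting boundary $t_{k}$ itself depends on $\vartheta$, a single-equation implicit-function argument is insufficient, and a naive attempt to track hop \emph{durations} fails since $\Upsilon_{m,n}$ depends on the absolute time window through the time-varying channel rather than only on the interval length. The resolution is the interval-containment observation above — a smaller left endpoint together with a hypothetically larger right endpoint makes the $\vartheta_{2}$-throughput strictly exceed $S$, which is impossible — and this is what makes the induction propagate cleanly from $t_{2}$ all the way to $t_{M}$.
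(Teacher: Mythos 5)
Your proof is correct, and it follows the same overall skeleton as the paper's: fix the route, take $\vartheta_{1}<\vartheta_{2}$, and show by induction on the hop index that every boundary satisfies $t_{k}(\vartheta_{2})<t_{k}(\vartheta_{1})$, with the identical base case at $k=2$. Where you diverge is in the inductive step. The paper routes everything through its Lemma~\ref{lem:monotonicity} (the weight $w_{m,n}^{k}(t_{k},t_{k+1})$ is increasing in $t_{k}$ and decreasing in $t_{k+1}$), applying it twice: first to get $w_{o(k),o(k+1)}^{k}(t_{k}(\vartheta_{1}),t_{k+1}(\vartheta_{2}))>\vartheta_{2}>\vartheta_{1}$, then to invert this into $t_{k+1}(\vartheta_{1})>t_{k+1}(\vartheta_{2})$. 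You instead argue by contradiction directly at the level of the throughput integral $\Upsilon$: assuming the order of the right endpoints flips gives interval containment, and then positivity plus strict $\vartheta$-monotonicity of the integrand forces $S>S$. Your version is self-contained --- it effectively inlines the content of Lemma~\ref{lem:monotonicity} (whose own proof in the paper is the same kind of integral-splitting argument) into a single contradiction step, and it makes explicit the well-posedness of the recursion, which the paper leaves implicit. The paper's version buys modularity: Lemma~\ref{lem:monotonicity} is stated once and reused elsewhere (in the proof of Proposition~\ref{prop:opt_c_p2_over_o} and in the convergence analysis of Section~\ref{subsec:Convergence-analysis}), so its Proposition~\ref{prop:c_increasing_over_vartheta} proof stays very short. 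Your closing remark about why a naive duration-tracking or single-equation implicit-function argument fails --- the window $[t_{k},t_{k+1})$ matters through the time-varying channel, not just its length --- is a point the paper never articulates, and it is a genuine insight into why the induction must propagate the absolute boundaries.
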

\begin{proof}
See Appendix~\ref{sec:proof_c_increasing_over_vartheta}.
\end{proof}
Proposition~\ref{prop:c_increasing_over_vartheta} states that increasing
the tolerance of the interference leakage $\vartheta$, less time
resource is needed and $t_{M}(\vartheta)$ is reduced.

The monotonicity property implies a bisection search strategy to find
the optimal $\vartheta^{*}$ such that $t_{M}(\vartheta^{*})=T$.
This leads to the basic structure of Algorithm~\ref{alg:ht_opt_bi_alg}.

\begin{algorithm}
\# Input: Route $\mathbf{o}$ and time boundary $\hat{\mathbf{t}}=\{\hat{t}_{k}\}_{k\in\{1,\cdots,M\}}$.
\begin{enumerate}
\item Set $\vartheta\leftarrow(\vartheta_{\max}+\vartheta_{\min})/2$.
\item Starting from $t_{1}=0$, update $t_{k+1}\leftarrow t_{k+1}^{\prime}+\alpha_{k}(\hat{t}_{k+1}-\hat{t}_{k})$,
where $t_{k+1}^{\prime}$ is computed based on $t_{k}$ by solving
$\Upsilon_{o(k),o(k+1)}\left(\vartheta;t_{k},t_{k+1}\right)=S$ in
(\ref{eq:eqv_obj_p1}) and $\alpha$ is defined in (\ref{eq:def_alpha}).
\item If $t_{M}(\vartheta)\le T$, $\vartheta_{\max}\leftarrow\vartheta$;
otherwise, $\vartheta_{\min}\leftarrow\vartheta$.
\item Repeat from step 1) until $\left|\vartheta_{\max}-\vartheta_{\min}\right|\to0$.
\end{enumerate}
\# Output: $\bm{t}$ and $p_{m,n}(t)$ from (\ref{eq:opt_p_0}).

\caption{\label{alg:ht_opt_bi_alg}Time and power allocation algorithm.}
\end{algorithm}

\subsection{Implementation with Backtracking\label{subsec:Graph-based-hybrid-optimization}}

In order to improve the convergence, a backtracking scheme is used
in Algorithm~\ref{alg:ht_opt_bi_alg}. Specially in step 2) in Algorithm~\ref{alg:ht_opt_bi_alg},
update $t_{k+1}\leftarrow t_{k+1}^{\prime}+\alpha_{k}(\hat{t}_{k+1}-\hat{t}_{k})$,
where $(\hat{t}_{k+1}-\hat{t}_{k})$ is the past transmission duration
for $k$th hop, serving as an external parameter that remains unchanged
in Algorithm~\ref{alg:ht_opt_bi_alg}, and $\alpha_{k}$ is the backtracking
parameter, defined as 
\begin{equation}
\alpha_{k}=\alpha\mathbb{I}\left\{ o\left(k\right)=o\left(k+1\right)\right\} ,\,\alpha\in\left(0,1\right).\label{eq:def_alpha}
\end{equation}
Backtracking only performs when the virtue edge is selected.

The reason that a backtracking strategy is employed to update $t_{k}$
is to prevent being trapped at a local optimum too early and to improve
the convergence. It is known that an alternating algorithm is prone
to being trapped at a stationary point if not appropriately initialized,
and a soft update can relieve this phenomenon. In our case, a virtual
edge implies $t_{k}=t_{k+1}$ and the $k$th hop should be effectively
removed. However, removing a layer in the graph will permanently prevents
adding back this layer in future iterations. Thus, the backtracking
update prevents the collapse of the graph while still allowing $t_{k+1}\to t_{k}$
for an virtual edge.

Since the construction of $\mathbf{t}$ from $\vartheta$ is unique
due to Proposition~\ref{prop:power_allocation_policy} and the optimality
condition given in Proposition~\ref{prop:opt_c_p2_over_o} is both
sufficient and necessary, it naturally leads to the following optimality
result.
\begin{prop}
\label{prop:opt_alg_1}(Optimality) When the backtracking finishes,
{\em i.e.}, $\alpha_{k}(\hat{t}_{k+1}-\hat{t}_{k})=0,\forall k$,
Algorithm~\ref{alg:ht_opt_bi_alg} finds the optimal solution \textup{$\mathbf{t}^{*}$}
to $\mathscr{P}2$ given each $\mathbf{o}$.
\end{prop}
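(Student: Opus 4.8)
The plan is to show that, once the backtracking offset has vanished, Algorithm~\ref{alg:ht_opt_bi_alg} reduces to a clean bisection on $\vartheta$ whose fixed point satisfies exactly the sufficient-and-necessary optimality condition of Proposition~\ref{prop:opt_c_p2_over_o}. I would organize the argument in three stages: (i) characterize the time boundaries the algorithm constructs when $\alpha_k(\hat{t}_{k+1}-\hat{t}_k)=0$ for all $k$; (ii) argue that the bisection converges to the unique $\vartheta^{*}$ with $t_M(\vartheta^{*})=T$; and (iii) invoke Proposition~\ref{prop:opt_c_p2_over_o} to conclude global optimality for the given route $\mathbf{o}$.

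First I would observe that the hypothesis $\alpha_k(\hat{t}_{k+1}-\hat{t}_k)=0$ makes the update in step~2 collapse to $t_{k+1}\leftarrow t_{k+1}'$, where $t_{k+1}'$ is the root of $\Upsilon_{o(k),o(k+1)}(\vartheta;t_k,t_{k+1})=S$ in (\ref{eq:eqv_obj_p1}). This root is unique because $\Upsilon_{m,n}$ is strictly increasing in its upper time limit $t_{k+1}$, as it integrates a strictly positive capacity integrand. Applying this recursion from $t_1=0$ therefore yields a well-defined boundary vector $\mathbf{t}(\vartheta)$ in which, by Proposition~\ref{prop:power_allocation_policy}, every link attains exactly $w_{o(k),o(k+1)}^{k}(t_k,t_{k+1})=\vartheta$. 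Hence $\mathbf{t}(\vartheta)$ already meets the equal-leakage requirement of Proposition~\ref{prop:opt_c_p2_over_o} for every admissible $\vartheta$; the only remaining freedom is to select the single $\vartheta$ that also honors the horizon constraint $t_M=T$ in (\ref{eq:c_ht_p1}).

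Next I would establish convergence of the bisection. By Proposition~\ref{prop:c_increasing_over_vartheta}, the map $\vartheta\mapsto t_M(\vartheta)$ is strictly decreasing, so there is at most one $\vartheta^{*}$ with $t_M(\vartheta^{*})=T$; existence I would obtain from the continuity of $t_M(\cdot)$ together with an initial bracket $[\vartheta_{\min},\vartheta_{\max}]$ chosen so that $t_M(\vartheta_{\max})\le T\le t_M(\vartheta_{\min})$. The update logic in step~3 is consistent with this monotonicity: if $t_M(\vartheta)\le T$ the horizon is not fully exploited and a smaller leakage remains feasible, so $\vartheta_{\max}\leftarrow\vartheta$; otherwise $\vartheta_{\min}\leftarrow\vartheta$. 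Each iteration thus halves a bracket that provably contains $\vartheta^{*}$, and step~4 drives $|\vartheta_{\max}-\vartheta_{\min}|\to 0$, so the iterate converges to $\vartheta^{*}$ and the associated $\mathbf{t}(\vartheta^{*})$ satisfies $t_1=0$ and $t_M=T$.

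Finally I would assemble the pieces: the output $\mathbf{t}^{*}=\mathbf{t}(\vartheta^{*})$ satisfies $t_1=0$, $t_M=T$, and $w_{o(k),o(k+1)}^{k}(t_k^{*},t_{k+1}^{*})=\vartheta^{*}$ for all $k$. Since Proposition~\ref{prop:opt_c_p2_over_o} guarantees that this equal-leakage condition is \emph{both} sufficient \emph{and} necessary for optimality of $\mathscr{P}2$ under a fixed route, $\{\mathbf{t}^{*},\vartheta^{*}\}$ is the global optimum given $\mathbf{o}$. I expect the principal obstacle to lie in the convergence stage rather than the algebraic reduction: specifically, verifying that the bracketing update of step~3 always preserves $\vartheta^{*}\in[\vartheta_{\min},\vartheta_{\max}]$ and that a valid initial bracket exists, which is exactly where the strict monotonicity of Proposition~\ref{prop:c_increasing_over_vartheta} and the range and continuity of $t_M(\cdot)$ must be used with care.
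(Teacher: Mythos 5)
Your proposal is correct and follows essentially the same route as the paper, which states Proposition~\ref{prop:opt_alg_1} without a formal appendix proof and justifies it by exactly this chain: the recursion on $\Upsilon_{o(k),o(k+1)}(\vartheta;t_k,t_{k+1})=S$ uniquely constructs an equal-leakage boundary vector $\mathbf{t}(\vartheta)$ (Proposition~\ref{prop:power_allocation_policy}), the strict monotonicity of $t_M(\vartheta)$ (Proposition~\ref{prop:c_increasing_over_vartheta}) lets the bisection isolate the unique $\vartheta^*$ with $t_M(\vartheta^*)=T$, and the sufficient-and-necessary condition of Proposition~\ref{prop:opt_c_p2_over_o} then yields global optimality for the fixed route $\mathbf{o}$. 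Your added care about the existence of a valid initial bracket and the continuity of $t_M(\cdot)$ is a detail the paper leaves implicit, but it does not change the argument's structure.
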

The overall structure of solving $\mathscr{P}2$ is summarized in
the following looping steps:
\begin{enumerate}
\item[i)] update $\mathbf{o}$ from bottleneck path planning
\item[ii)] compute $\mathbf{t}$ from Algorithm~\ref{alg:ht_opt_bi_alg}.
\end{enumerate}
A detailed implementation is given in Algorithm~\ref{alg:hybrid_opt_alg}.
We show in the following that the Algorithm~\ref{alg:hybrid_opt_alg}
converges and its computational complexity is proportional to the
duration and the cube of the number of aerial nodes.

\subsubsection{Convergence analysis\label{subsec:Convergence-analysis}}

The following analysis proves that Algorithm~\ref{alg:hybrid_opt_alg}
converges, and that all virtual edges are eliminated at convergence;
specifically, $t_{k+1}-t_{k}=0$ whenever $o(k+1)=o(k)$.

Algorithm~\ref{alg:hybrid_opt_alg} modifies the objective value
$\vartheta$ in two places: updating the route $\mathbf{o}$ in step
2) and updating the time boundary in step 3). It is proven below that
the objective value $\vartheta$ decreases through the iteration.
Furthermore, $\vartheta$ is bounded below by 0, then the Algorithm~\ref{alg:hybrid_opt_alg}
must converge.

The objective value $\vartheta$ decreases during the route update
$\mathbf{o}$ in step 2) because, for a given time boundary $\mathbf{t}^{(i-1)}$,
the previous transmission route $\mathbf{o}^{(i-1)}$ is a feasible
solution to problem $\mathscr{P}2$ and the bottleneck path planning
algorithm can find the optimal solution to $\mathscr{P}2$, which
ensures find a lower $\vartheta$. Similarly, the objective value
$\vartheta$ decreases during the time boundary update $\mathbf{t}$
in step 3). For the current route $\mathbf{o}^{(i)}$, the objective
value for the previous time boundary $\mathbf{t}^{(i-1)}$ is greater
than or equal to that of a modified time boundary $\mathbf{t}^{\prime}$,
where $t_{k+1}^{\prime}=t_{k}^{(i-1)}+\alpha(t_{k+1}^{(i-1)}-t_{k}^{(i-1)})$
if $o^{(i)}(k)=o^{(i)}(k+1)$ and $t_{k+1}^{\prime}=t_{k+1}^{(i-1)}$
otherwise. This holds because when $o^{(i)}(k)=o^{(i)}(k+1)$, we
have
\[
w_{o^{(i)}(k),o^{(i)}(k+1)}(t_{k}^{\prime},t_{k+1}^{\prime})=0
\]
 due to the property of virtual edge, and 
\begin{multline*}
w_{o^{(i)}(k+1),o^{(i)}(k+2)}(t_{k+1}^{\prime},t_{k+2}^{\prime})\\
<w_{o^{(i)}(k+1),o^{(i)}(k+2)}(t_{k+1}^{(i-1)},t_{k+2}^{(i-1)})
\end{multline*}
due to the decreasing monotonicity of $w_{m,n}(t_{k},t_{k+1})$ over
$t_{k}$ as stated in Lemma~\ref{lem:monotonicity} in Appendix~\ref{sec:proof_opt_c_p2_over_o}.
Since $\mathbf{t}^{\prime}$ is a feasible solution to problem $\mathscr{P}2$,
and the Algorithm~\ref{alg:ht_opt_bi_alg} identifies the optimal
solution to $\mathscr{P}2$, similarly to Proposition~\ref{prop:opt_alg_1},
updating the time boundary in step 3) guarantees a lower $\vartheta$.

All virtual edges are eliminated upon convergence because the time
interval of the selected virtual edge after the $i$th iteration is
$\alpha^{i}(t_{k}^{(i-1)}-t_{k-1}^{(i-1)})<\left|\mathbf{t}^{(i)}-\mathbf{t}^{(i-1)}\right|$,
which approaches zero as $\left|\mathbf{t}^{(i)}-\mathbf{t}^{(i-1)}\right|\to0$
when convergence.

\subsubsection{Complexity analysis\label{subsec:complexity_analysis_single}}

The overall computational complexity for Algorithm~\ref{alg:hybrid_opt_alg}
is $\mathcal{O}(M^{2}(M+T)\omega)$, where $\omega$ is the iteration
number of hybrid optimization algorithm, including $\mathcal{O}(M^{2}T)$
for graph construction in step 1), $\mathcal{O}(M^{3})$ for bottleneck
path selection in step 2), and $\mathcal{O}(T)$ for time boundaries
update in step 3). In detail, graph construction includes $M^{3}$
weight calculations, each requiring $\mathcal{O}(t_{k+1}-t_{k})$,
leading to a total complexity of $\mathcal{O}(M^{2}T)$. The route
update in graph $\mathscr{G}\left(\mathbf{t}\right)$ with $M$ incoming
edges per node, $M$ nodes per layer and $M$ layers, resulting in
a complexity of $\mathcal{O}(M^{3})$. The time boundaries update
includes $\mathcal{O}(1)$ of searching optimal $\vartheta^{*}$ by
$M-1$ times of calculation for $t_{k}(\vartheta)$ with complexity
$\mathcal{O}(t_{k+1}-t_{k})$, therefore time complexity is $\mathcal{O}(T)$.

\begin{algorithm}
\# Initialization: Set $\mathbf{t}^{(0)}\leftarrow\{kT/(M-1)\}_{k\in\{0,\cdots,M-1\}}$
and $i\leftarrow1$.
\begin{enumerate}
\item Solve (\ref{eq:eqv_obj_p1}) for $\vartheta_{m,n}^{*}$ and obtain
$w_{m,n}^{k}=\vartheta_{m,n}^{*}$, where approximations (\ref{eq:c_l_b})\textendash (\ref{eq:aprox_bound_2})
can be used.
\item Use a bottleneck path planning algorithm to obtain $\mathbf{o}^{(i)}$
by solving $\mathscr{P}2$.
\item Use Algorithm~\ref{alg:ht_opt_bi_alg} to obtain $\mathbf{\ensuremath{t}}^{(i)}$
where approximations (\ref{eq:c_l_b})\textendash (\ref{eq:aprox_bound_2})
can be used.
\item Repeat from step 1) until $\left|\mathbf{t}^{(i)}-\mathbf{t}^{(i-1)}\right|\to0$.
\end{enumerate}
\# Output: $\mathbf{t}^{(i)}$, $\mathbf{o}^{(i)}$, and $p_{m,n}(t)$
from (\ref{eq:opt_p_0}).

\caption{\label{alg:hybrid_opt_alg}Efficient graph-based single commodity
algorithm.}
\end{algorithm}

\section{Multi-commodity Transportation\label{sec:multi_task}}

This section extends the single commodity transportation in Section
\ref{sec:alg_single} to multi-commodity transportation, where a commodity
refers to a data package of size $S$ transported from a source node
to a destination node within a deadline of $T$ seconds. For simplicity,
we assume all the commodities have the same size $S$ and deadline
$T$. Consider that orthogonal time-frequency resources are dynamically
allocated to transmitting different commodities so that there is no
interference among the nodes in $\mathcal{M}$ in the aerial network,
but there is still interference to the neighbor nodes $\mathcal{N}$.
Therefore, the core problem is to orthogonally allocate the time-frequency
resources in a predictive way for a horizon of $T$ seconds to exploit
the dynamic of the network topology while controlling the possible
interference to nodes in $\mathcal{N}$.

\subsection{Multi-Commodity Transportation Problem Formulation\label{subsec:multi_commodity_model}}

Consider there are $Z$ data packages, each with a size $S$, that
needs to be delivered from source node $s_{z}$ to the destination
node $d_{z}$, where $z\in\mathcal{Z}\triangleq\{1,2,\cdots,Z\}$.
Consider multiple tasks share orthogonal time-frequency resource and
each flow $z\in\mathcal{Z}$ occupies $l_{z}\left(t\right)$ normalized
resource at time $t$. Therefore, the resource allocation strategy
$\boldsymbol{l}(t)\triangleq\{l_{z}(t)\}_{z\in\mathcal{Z}}$ at each
time $t$ should satisfy the orthogonality constraint $\boldsymbol{l}(t)\in\mathcal{L}$,
where
\[
\mathcal{L}\triangleq\left\{ \left\{ l_{z}\left(t\right)\right\} _{z\in\mathcal{Z}}:l_{z}\left(t\right)\in\left[0,1\right],\sum_{z\in\mathcal{Z}}l_{z}\left(t\right)\in\left[0,1\right]\right\} .
\]

Our goal is to transport all data packages while minimizing the maximum
interference power leakage during the process. This is achieved by
controlling the transmission route $\mathbf{O}\triangleq\{o(k,z)\}_{k\in\mathcal{M},z\in\mathcal{Z}}$,
time boundaries $\mathbf{T}\triangleq\{t_{k,z}\}_{k\in\mathcal{M},z\in\mathcal{Z}}$,
time-frequency resource allocation $\mathbf{L}\triangleq\{\boldsymbol{l}(t)\}_{t\in[0,T]}$,
and power allocation strategy $\mathbf{P}\triangleq\{p_{o(k,z),o(k+1,z)}(t)\}_{k\in\mathcal{M},z\in\mathcal{Z},t\in[0,T]}$
for all the commodities $z$ along each hop $k$. Then, the problem
is formulated as
\begin{align}
\mathscr{P}3:\;\underset{\mathbf{O},\mathbf{T},\mathbf{L},\vartheta}{\text{minimize}} & \quad\vartheta\label{eq:multi_task_obj_p1}\\
\text{subject to} & \quad w_{o(k,z),o(k+1,z)}^{k}\left(t_{k,z},t_{k+1,z},\mathbf{l}_{z}\right)\le\vartheta,\forall k,z\label{eq:max_itf_c}\\
 & \quad o\left(1,z\right)=s_{z},o\left(M,z\right)=d_{z},\forall z\label{eq:multi_o_c_1}\\
 & \quad o\left(k,z\right)\in\mathcal{M},\forall k,z\label{eq:multi_o_c_2}\\
 & \quad0\le t_{1,z}\le\cdots\le t_{M,z}\le T,\forall z\label{eq:multi_t_c}\\
 & \quad\boldsymbol{l}(t)\in\mathcal{L},\forall t\label{eq:multi_l_c}
\end{align}
where the objective (\ref{eq:multi_task_obj_p1}) is the maximum interference
power leakage and constraint (\ref{eq:max_itf_c}) is to ensure that
the interference to any neighbor node during the transmission of all
$Z$ data is less than $\vartheta$. Constraints (\ref{eq:multi_o_c_1})\textendash (\ref{eq:multi_l_c})
are the relay, time causality, and time-frequency resource constraints.

The weight $w_{m,n,z}^{k}\left(t_{k,z},t_{k+1,z},\mathbf{l}_{z}\right)$
is the maximum interference to the neighbor network during the $k$th
hop under the allocated time-frequency resource $\mathbf{l}_{z}\triangleq\{l_{z}(t)\}_{t\in[0,T]}$
defined according to $\mathscr{P}1$, where the total bandwidth $B$
in (\ref{eq:c_thp}) is replaced as a sub-band bandwidth $l_{z}\left(t\right)\cdot B$.

To address the challenge of coupled variables, we first decompose
problem $\mathscr{P}3$ into several parallel single-commodity subproblems
with fixed time-frequency allocations in Section \ref{subsec:multi_to_single}.
Then in in Section \ref{subsec:Resource-Allocation-Optimization},
we introduce a simplex-based bisection search algorithm to determine
the optimal time-frequency allocation , even though the resulting
subproblem remains non-convex.

\subsection{Problem Decomposition and Relationship to the Single-Commodity Case\label{subsec:multi_to_single}}

It is observed from problem $\mathscr{P}3$ that for multiple tasks,
the variables are coupled over $z$ only by constraint (\ref{eq:max_itf_c}).
Therefore, given the time-frequency resource allocation variable $\mathbf{L}$,
problem $\mathscr{P}3$ is decomposed into a number of parallel single
commodity subproblems identical to $\mathscr{P}2$ that have been
solved in Section \ref{sec:alg_single}.

Specifically, denote $\vartheta_{z}$ as the maximum interference
to the neighbor network during the transportation of the specific
data package $z$, then, $\vartheta_{z}$ is the optimal solution
to Problem $\mathscr{P}2$ with allocated resource $l_{z}\left(t\right)\cdot B$
and the solution is given by Algorithm~\ref{alg:hybrid_opt_alg}.
As a result, given time-frequency resource allocation $\mathbf{L}$,
the optimal interference power leakage is $\vartheta=\max_{z\in\mathcal{Z}}\{\vartheta_{z}\}$.

Similar to the optimal power policy for problem $\mathscr{P}1$ discussed
in Proposition~\ref{prop:power_allocation_policy}, the optimal power
allocation for problem $\mathscr{P}3$ is given by 
\begin{equation}
p_{o\left(k,z\right),o\left(k+1,z\right)}\left(t\right)=\vartheta/\max_{j\in\mathcal{N}}\left\{ h_{o\left(k,z\right),j}\left(t\right)\right\} .\label{eq:opt_p_multi}
\end{equation}

The insight is that, for any given interference power leakage $\vartheta$,
the maximum power allowed under the interference constraint is utilized
to maximize the throughput.

\subsection{Resource Allocation Optimization\label{subsec:Resource-Allocation-Optimization}}

Next, we investigate an efficient algorithm for finding the resource
allocation $\mathbf{L}$, which leads to a non-convex problem.

Given route and time boundary variables $\{\mathbf{O},\mathbf{T}\}$,
problem $\mathscr{P}3$ over variables $\{\mathbf{L},\vartheta\}$
is simplified to:
\begin{align}
\underset{\mathbf{L},\vartheta}{\text{minimize}} & \ \vartheta\label{eq:opt_l_v2}\\
\text{subject to} & \ \int_{t_{k,z}}^{t_{k+1,z}}\mathbb{E}\left[\log\left(1+\frac{\vartheta h_{o\left(k,z\right),o\left(k+1,z\right)}\left(t\right)}{\max_{j\in\mathcal{N}}\left\{ h_{o\left(k,z\right),j}\left(t\right)\right\} \delta^{2}}\right)\right]\nonumber \\
 & \ \quad\quad\ \times l_{z}\left(t\right)Bdt\ge S,\forall z\in\mathcal{Z},\forall k\in\mathcal{M}\label{eq:multi_thp_c_v2}\\
 & \ \boldsymbol{l}(t)\in\mathcal{L},\forall t.\label{eq:multi_l_c_v2_2}
\end{align}

This problem is non-convex because of the non-convexity of the throughput
constraint in (\ref{eq:multi_thp_c_v2}). However, we observe a monotonicity
property on the region of the feasible time-frequency allocation variable,
which enables a bisection-search method to find the optimal solution
$\mathbf{L}$ to $\mathscr{P}3$.

Denote the feasible set family of $\mathbf{L}$ over $\vartheta$
as 
\begin{align}
 & \Psi\left(\vartheta\right)\triangleq\left\{ \left\{ l_{z}\left(t\right)\right\} _{z\in\mathcal{Z},t\in\mathcal{T}}:\text{(\ref{eq:multi_thp_c_v2}), (\ref{eq:multi_l_c_v2_2})}\right\} .\label{eq:def_Phi}
\end{align}
It is shown in the following proposition that the region $\Psi\left(\vartheta\right)$
is monotonically increasing over $\vartheta$.
\begin{prop}
\label{prop:opt_c_l}(Monotonicity of $\Psi\left(\vartheta\right)$)
For any $0\le\vartheta_{1}<\vartheta_{2}$, we have $\Psi\left(\vartheta_{1}\right)\subseteq\Psi\left(\vartheta_{2}\right)$.
\end{prop}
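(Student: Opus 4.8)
The plan is to exploit the fact that the feasible set $\Psi(\vartheta)$ in (\ref{eq:def_Phi}) is defined by only two constraints, of which only the throughput constraint (\ref{eq:multi_thp_c_v2}) depends on $\vartheta$. The orthogonality constraint (\ref{eq:multi_l_c_v2_2}), $\boldsymbol{l}(t)\in\mathcal{L}$, is entirely independent of $\vartheta$, so any allocation satisfying it at $\vartheta_{1}$ satisfies it verbatim at $\vartheta_{2}$. It therefore suffices to show that an allocation satisfying (\ref{eq:multi_thp_c_v2}) at $\vartheta_{1}$ also satisfies it at any larger $\vartheta_{2}$; the inclusion $\Psi(\vartheta_{1})\subseteq\Psi(\vartheta_{2})$ then follows immediately.

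To establish this, I would fix an arbitrary $\{l_{z}(t)\}\in\Psi(\vartheta_{1})$ and track the monotonicity of the left-hand side of (\ref{eq:multi_thp_c_v2}) as $\vartheta$ increases. The crucial observation is that, for each fixed $t$ and each realization of the small-scale fading, the map
\[
\vartheta\mapsto\log\!\left(1+\frac{\vartheta\,h_{o(k,z),o(k+1,z)}(t)}{\max_{j\in\mathcal{N}}\{h_{o(k,z),j}(t)\}\,\delta^{2}}\right)
\]
is nondecreasing, because $\log(1+\cdot)$ is increasing and the coefficient multiplying $\vartheta$ is nonnegative (the channel power gains are nonnegative and $\delta^{2}>0$). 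I would then propagate this pointwise inequality through the two averaging operations in (\ref{eq:multi_thp_c_v2}). First, since the inequality holds for every fading realization, monotonicity of the expectation preserves it after applying $\mathbb{E}[\cdot]$. Second, because the integration weight $l_{z}(t)B\ge0$ is nonnegative (recall $l_{z}(t)\in[0,1]$ from $\boldsymbol{l}(t)\in\mathcal{L}$), multiplying by it and integrating over $[t_{k,z},t_{k+1,z})$ preserves the direction of the inequality. Hence, for every $z$ and $k$, the left-hand side of (\ref{eq:multi_thp_c_v2}) evaluated at $\vartheta_{2}$ is at least its value at $\vartheta_{1}$, which is $\ge S$ by the assumption $\{l_{z}(t)\}\in\Psi(\vartheta_{1})$. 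Thus (\ref{eq:multi_thp_c_v2}) holds at $\vartheta_{2}$, giving $\{l_{z}(t)\}\in\Psi(\vartheta_{2})$.

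There is no substantial obstacle in this argument; it reduces entirely to a chain of elementary monotonicity facts, and the statement is essentially a structural consequence of the problem formulation. The only point requiring care is to verify that \emph{every} operation applied to the integrand — composition with $\log(1+\cdot)$, taking expectation, and integrating against the nonnegative weight $l_{z}(t)B$ — is monotone-preserving, so that the pointwise inequality in $\vartheta$ survives all the way to the aggregate throughput. In particular, the nonnegativity of $l_{z}(t)$ enforced by $\boldsymbol{l}(t)\in\mathcal{L}$ is precisely what guarantees that the final integration step does not reverse the inequality.
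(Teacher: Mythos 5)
Your proof is correct and follows essentially the same route as the paper's: fix an allocation feasible at $\vartheta_{1}$, observe that the orthogonality constraint does not involve $\vartheta$, and use the monotonicity of $\log(1+\vartheta x)$ in $\vartheta$ (for $x\ge 0$) propagated through the expectation and the integration against the nonnegative weight $l_{z}(t)B$ to conclude feasibility at $\vartheta_{2}$. Your version is in fact slightly more careful than the paper's, which states the same pointwise monotonicity argument but does not spell out that expectation and integration preserve the inequality (and contains a harmless typo, taking $\mathbf{L}^{\prime}\in\Psi(\vartheta_{2})$ where $\Psi(\vartheta_{1})$ is meant).
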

\begin{proof}
See Appendix~\ref{sec:proof:opt_c_l}.
\end{proof}
The insight of Proposition~\ref{prop:opt_c_l} is that increasing
$\vartheta$ is equivalent to increasing all transmission power levels.
As a result, a time-frequency resource allocation scheme feasible
under a lower $\vartheta$ must also be feasible under a higher $\vartheta$.

Due to the monotonicity property of $\Psi\left(\vartheta\right)$
with respect to $\vartheta$ in Proposition~\ref{prop:opt_c_l},
the optimal solution to problem (\ref{eq:opt_l_v2}) can be found
using a bisection search on $\vartheta$. The objective is to find
a smallest $\vartheta$ with $|\Psi\left(\vartheta\right)|>0$. Note
that the solution is infeasible if $\Psi\left(\vartheta\right)=\varnothing$.

The algorithm is shown in Algorithm~\ref{alg:l_bi}. Specifically,
we repeat to find the set $\Psi\left(\vartheta\right)$ by searching
$\vartheta$ until $\left|\vartheta_{\max}-\vartheta_{\min}\right|\to0$.
The simplex method can be used to check $\Psi\left(\vartheta\right)=\varnothing$
or not and obtain a point in set $\Psi\left(\vartheta\right)$, since
given any $\vartheta$, the region $\Psi\left(\vartheta\right)$ is
constructed by several linear inequality constraints.

\begin{algorithm}
\# Input: $\mathbf{O}$ and $\mathbf{T}$;
\begin{enumerate}
\item Set $\vartheta\leftarrow(\vartheta_{\max}+\vartheta_{\min})/2$, and
check the feasibility of the set $\Psi\left(\vartheta\right)$ using
a simplex method.
\item If $\Psi\left(\vartheta\right)$ is empty, $\vartheta_{\min}\leftarrow\vartheta$;
otherwise, $\vartheta_{\max}\leftarrow\vartheta$.
\item Repeat from step 1) until $\left|\vartheta_{\max}-\vartheta_{\min}\right|\to0$.
\end{enumerate}
\# Output: $\vartheta\leftarrow\vartheta_{\max}$, $\mathbf{L}\in\Psi(\vartheta_{\max})$,
and $\mathbf{P}$ from (\ref{eq:opt_p_multi}).

\caption{\label{alg:l_bi}Bisection resource allocation algorithm.}
\end{algorithm}

Since $\Psi\left(\vartheta\right)$ is increasing over $\vartheta$
according to Proposition~\ref{prop:opt_c_l}, leading to for any
$\vartheta<\vartheta^{*}$, $\Psi\left(\vartheta\right)=\varnothing$,
it naturally leads to the following optimality result.
\begin{prop}
\label{prop:opt_alg_3}(Optimality of Algorithm~\ref{alg:l_bi})
Algorithm~\ref{alg:l_bi} finds the optimal solution $\mathbf{L}^{*}$
to $\mathscr{P}3$ for any $\{\mathbf{O},\mathbf{T}\}$.
\end{prop}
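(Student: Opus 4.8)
The plan is to reduce the statement to a monotone-feasibility (threshold) argument resting on Proposition~\ref{prop:opt_c_l}. After fixing $\{\mathbf{O},\mathbf{T}\}$, problem $\mathscr{P}3$ collapses to the simplified program (\ref{eq:opt_l_v2}), which minimizes $\vartheta$ subject to $\mathbf{L}\in\Psi(\vartheta)$. Its optimal value is therefore
\[
\vartheta^{*}=\inf\{\vartheta\ge 0:\Psi(\vartheta)\neq\varnothing\},
\]
so the entire claim reduces to showing that Algorithm~\ref{alg:l_bi} converges to $\vartheta^{*}$ and returns a feasible $\mathbf{L}$.

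First I would turn Proposition~\ref{prop:opt_c_l} into a threshold property: because $\Psi$ is nondecreasing in $\vartheta$, the set $\{\vartheta:\Psi(\vartheta)\neq\varnothing\}$ is an upper interval, whence $\Psi(\vartheta)=\varnothing$ for all $\vartheta<\vartheta^{*}$ and $\Psi(\vartheta)\neq\varnothing$ for all $\vartheta>\vartheta^{*}$. This is precisely what licenses bisection. I would then verify the loop invariant $\Psi(\vartheta_{\min})=\varnothing$ and $\Psi(\vartheta_{\max})\neq\varnothing$: it holds at initialization (take $\vartheta_{\min}=0$, where $\Psi(0)=\varnothing$ since the integrand in (\ref{eq:multi_thp_c_v2}) vanishes at $\vartheta=0<S$, and $\vartheta_{\max}$ large enough to be feasible), and step~1)\textendash 2) preserves it, moving $\vartheta_{\min}$ up only when $\Psi(\vartheta)=\varnothing$ (forcing $\vartheta<\vartheta^{*}$) and moving $\vartheta_{\max}$ down only when $\Psi(\vartheta)\neq\varnothing$ (forcing $\vartheta\ge\vartheta^{*}$). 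Combined with the threshold structure this yields $\vartheta_{\min}\le\vartheta^{*}\le\vartheta_{\max}$ throughout.

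With the invariant established, convergence is immediate: each iteration halves $\vartheta_{\max}-\vartheta_{\min}$, so the stopping test drives both endpoints to the common limit $\vartheta^{*}$. The returned $\vartheta_{\max}$ thus equals the optimum, and since $\Psi(\vartheta_{\max})\neq\varnothing$ the output $\mathbf{L}\in\Psi(\vartheta_{\max})$ is feasible and optimal; the accompanying $\mathbf{P}$ follows from (\ref{eq:opt_p_multi}) as argued in Section~\ref{subsec:multi_to_single}. To justify the feasibility oracle, I would note that for each fixed $\vartheta$ the constraints (\ref{eq:multi_thp_c_v2})\textendash(\ref{eq:multi_l_c_v2_2}) are linear in $\mathbf{L}$, so the simplex call correctly decides whether $\Psi(\vartheta)$ is empty and, when nonempty, exhibits a member.

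The hard part will be the boundary behavior at $\vartheta^{*}$, namely whether the infimum is attained, i.e., whether $\Psi(\vartheta^{*})$ is empty or not. Bisection sidesteps attainment because it always reports the upper endpoint $\vartheta_{\max}$ together with a certified-feasible $\mathbf{L}$, so the returned value converges to $\vartheta^{*}$ regardless; if exact attainment is desired, I would take feasible points $\mathbf{L}^{(n)}\in\Psi(\vartheta^{(n)})$ with $\vartheta^{(n)}\downarrow\vartheta^{*}$, extract a convergent subsequence using the boundedness of $\mathcal{L}$, and pass to the limit using joint continuity of the throughput functionals in (\ref{eq:multi_thp_c_v2}) in $(\mathbf{L},\vartheta)$ to conclude that the limit lies in $\Psi(\vartheta^{*})$.
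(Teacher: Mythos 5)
Your proposal is correct and follows essentially the same route as the paper, which justifies Proposition~\ref{prop:opt_alg_3} in a single sentence by invoking the monotonicity of $\Psi(\vartheta)$ from Proposition~\ref{prop:opt_c_l} to conclude that $\Psi(\vartheta)=\varnothing$ for all $\vartheta<\vartheta^{*}$, exactly the threshold property that licenses bisection. Your write-up merely makes explicit what the paper leaves implicit (the loop invariant $\vartheta_{\min}\le\vartheta^{*}\le\vartheta_{\max}$, the linearity of the constraints justifying the simplex oracle, and the attainment issue at $\vartheta^{*}$), which is a welcome tightening rather than a different argument.
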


\subsection{Implementation}

The transmission strategy plan algorithm for multiple commodities
are described in Algorithm~\ref{alg:multi-task}. Specially, in each
iteration, the routes and time boundaries for all commodities are
updated based on the allocated time-frequency resource strategy $\mathbf{L}^{(i-1)}$,
then the time-frequency allocation strategy is updated based on the
allocated routes $\mathbf{O}^{(i)}$ and time boundaries $\mathbf{T}^{(i)}$.

\begin{algorithm}
\# Initialization: Set $\mathbf{t}_{z}^{(0)}\leftarrow\{kT/(M-1)\}_{k\in\{0,\cdots,M-1\}}$
for all $z\in\mathcal{Z}$, random $\mathbf{L}$, and $t\in[0,T]$,
and $i\leftarrow1$.
\begin{enumerate}
\item Obtain $\mathbf{o}_{z}^{(i)}$ and $\mathbf{t}_{z}^{(i)}$ by steps
1) to 3) in Algorithm~\ref{alg:hybrid_opt_alg} based on $\mathbf{t}_{z}^{(i-1)}$
for all $z\in\mathcal{Z}$.
\item Obtain $\mathbf{L}^{(i)}$ by Algorithm~\ref{alg:l_bi} based on
$\mathbf{O}^{(i)}$ and $\mathbf{T}^{(i)}$.
\item Repeat from step 1) until $\left|\mathbf{L}^{(i)}-\mathbf{L}^{(i-1)}\right|\to0$.
\end{enumerate}
\# Output: $\mathbf{O}\leftarrow\{\mathbf{o}_{z}^{(i)}\}_{z\in\mathcal{Z}}$,
$\mathbf{T}\leftarrow\{\mathbf{t}_{z}^{(i)}\}_{z\in\mathcal{Z}}$,
$\mathbf{L}^{(i)}$, and $\mathbf{P}$ from (\ref{eq:opt_p_multi}).

\caption{\label{alg:multi-task}Graph-based multi-commodity transportation
algorithm.}
\end{algorithm}

\begin{figure}
\begin{centering}
\includegraphics[width=1\columnwidth]{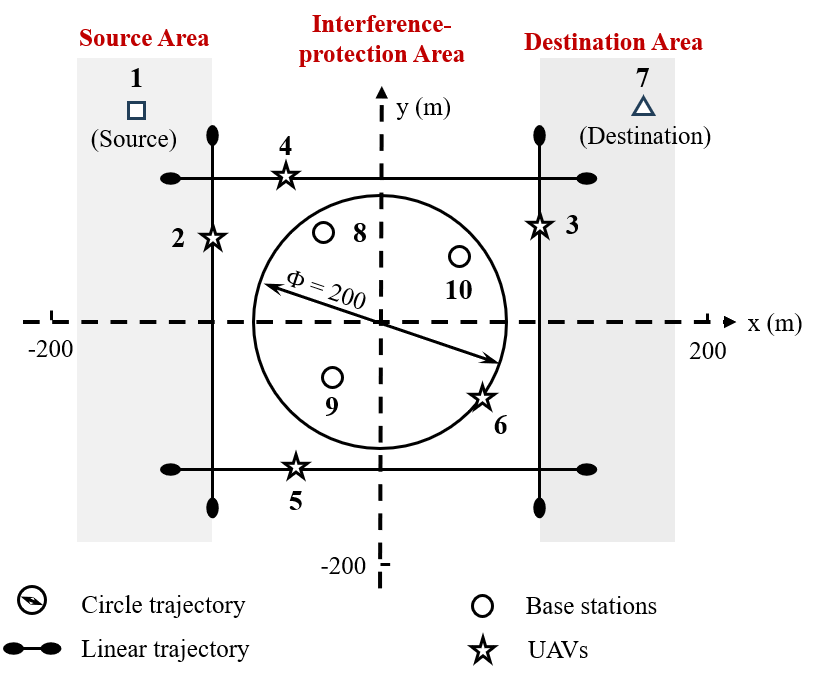}
\par\end{centering}
\caption{\label{fig:Ill_aerial_ad_hoc}Illustration of a sample aerial network
for data delivery, with $5$ aerial nodes (indexed 2 to 6), $3$ neighbor
nodes (indexed 8 to 10), and $1$ source-destination pair (indexed
1 and 7). The source node, neighbor nodes, and destination node are
uniformly randomly placed in the source area, interference-protection
area, and destination area, respectively. The aerial nodes follow
predefined trajectories.}
\end{figure}

\subsubsection{Convergence analysis}

Algorithm~\ref{alg:multi-task} modifies the objective value $\vartheta$
in two parts: updating routes and time boundaries in step 2), where
the value $\vartheta$ decreases over the iteration is shown in Section
\ref{subsec:Convergence-analysis} and updating time-frequency allocation
strategy in step 3), where the value $\vartheta$ decreases over the
iteration because $\mathbf{L}^{(i-1)}$ is a feasible point in $(i)$th
problem and $\mathbf{L}^{(i)}$ is the optimal solution to $\mathscr{P}3$
as shown in Proposition~\ref{prop:opt_alg_3}. As a result, the value
$\vartheta$ decreases over the iteration. It follows that $\vartheta$
is bounded below by $0$, then the Algorithm~\ref{alg:multi-task}
must converge.

\subsubsection{Complexity analysis}

The overall computational complexity is $\mathcal{O}((M^{2}(M+T)Z+(ZT)^{\varphi}\log(ZT))\omega)$,
where $M^{2}(M+T)$ is the complexity of updating route and time boundaries
for single commodity, similar to Section \ref{subsec:complexity_analysis_single},
and $(ZT)^{\varphi}\log(ZT)$ is the complexity of solving the linear
programming problem using simplex algorithm, $\varphi\approx2.38$
\cite{Jan:C20}, and $\omega$ is the iteration number of multi-commodity
algorithm.

\section{Simulation\label{sec:Simulation}}

\begin{figure*}
\begin{centering}
\includegraphics[width=1\textwidth]{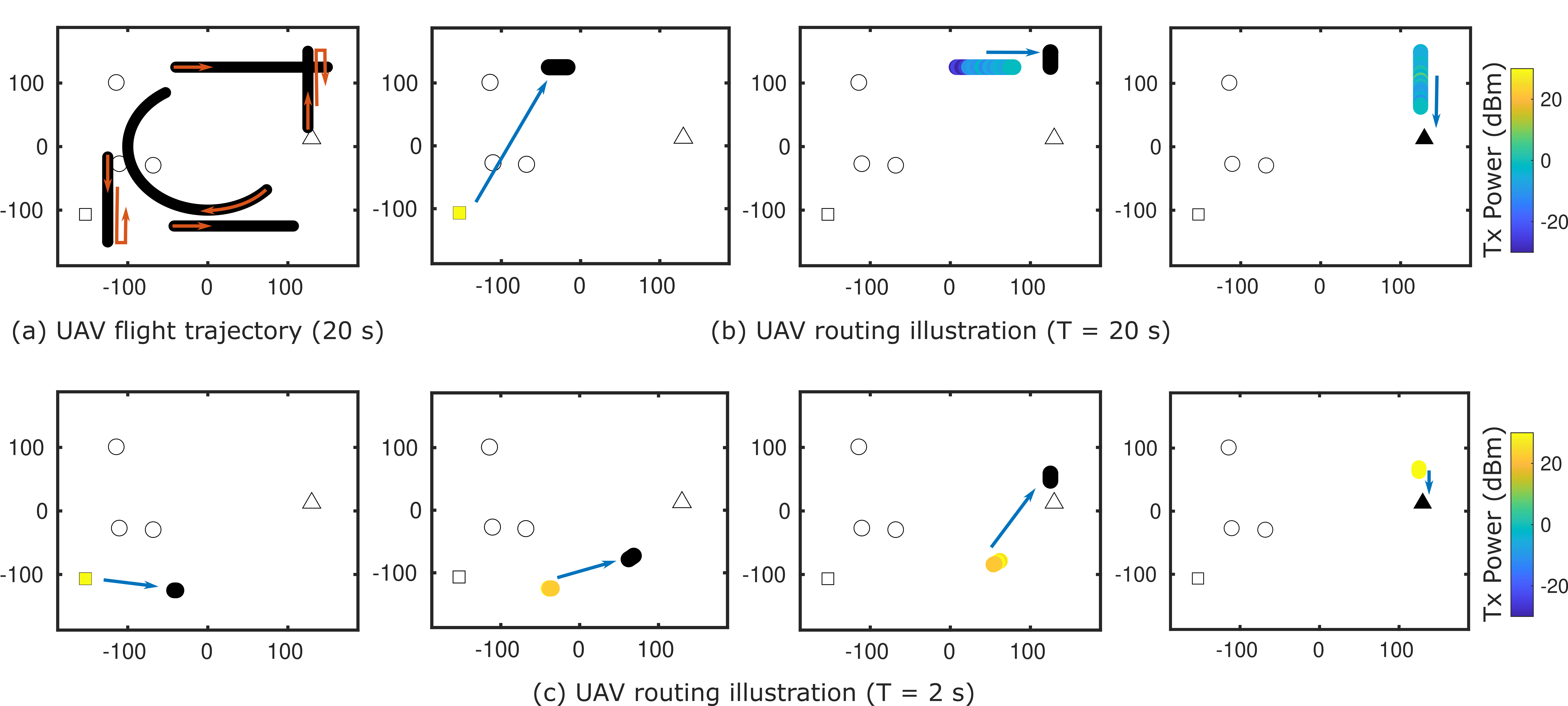}
\par\end{centering}
\caption{\label{fig:case_example}Illustration of two routing results. (a)
The red arrow indicates the initial \ac{uav} flight direction, while
the return arrow indicates that the \ac{uav} returns after reaching
its endpoint. (b) Routing result for a delay-tolerant transmission
task with $T=20\text{ s}$, where 2 relays are selected. (c) Routing
result for a delay-sensitive transmission task with $T=2\text{ s}$,
where 3 relays are selected. In both (b) and (c), transmission power
along the \ac{uav} trajectory is illustrated using a colorbar, and
receiver positions are shown as black circles.}
\end{figure*}

\begin{center}
\begin{table}
\caption{\label{tab:Implementation settings}Default Implementation Parameters}

\centering{}%
\begin{tabular}{>{\raggedright}p{0.3\columnwidth}|>{\raggedright}p{0.6\linewidth}}
\hline 
\textbf{Parameter} & \textbf{Description}\tabularnewline
\hline 
\rowcolor{lightcyan}Cargo \ac{uav} trajectory & Linear trajectory: vertical at 45\,m and horizontal at 50\,m.\tabularnewline
Cargo \ac{uav} hover time & Uniform in $[0,2]$ s.\tabularnewline
\rowcolor{lightcyan}Patrol \ac{uav} trajectory & Circular trajectory at 50\,m altitude.\tabularnewline
\ac{uav} speed & Uniform in $[5,20]$ m/s.\tabularnewline
\rowcolor{lightcyan}Base station location & Uniformly distributed in the interference protection area at ground
level (0\,m).\tabularnewline
Source location & Uniformly distributed in the source area at ground level (0\,m).\tabularnewline
\rowcolor{lightcyan}Destination location & Uniformly distributed in the destination area at ground level (0\,m).\tabularnewline
Carrier frequency & $f_{\text{c}}=3\text{ GHz}$.\tabularnewline
\rowcolor{lightcyan}Bandwidth & $B=10\text{ MHz}$.\tabularnewline
Noise power & $\sigma^{2}=-90\text{ dBm}$.\tabularnewline
\rowcolor{lightcyan}Path loss (LOS) & $22.0+28.0\log_{10}(d)+20\log_{10}(f_{\text{c}})$.\tabularnewline
Path loss (NLOS) & $22.7+36.7\log_{10}(d)+26\log_{10}(f_{\text{c}})$.\tabularnewline
\rowcolor{lightcyan}LOS probability & $\mathbb{P}(\text{LOS},\theta)=(1+6\times\text{exp}(-0.15[\theta-6])^{-1}$,
where $\theta$ is elevation angle.\tabularnewline
Shadowing & Log-normal distribution with 0\,dB mean, 8\,dB variance, and 5\,m
correlation distance.\tabularnewline
\rowcolor{lightcyan}Backtracking factor & $\alpha=0.5$.\tabularnewline
\hline 
\end{tabular}
\end{table}
\par\end{center}

Consider a \ac{uav}-based cargo delivery and patrol system, as shown
in Fig.~\ref{fig:Ill_aerial_ad_hoc}, consisting of four cargo \acpl{uav}
with fixed endpoints (representing merchant and user locations) and
one patrol \ac{uav} following a circular trajectory. The cargo \acpl{uav}
may hover at their endpoints for a random duration, modeled by a uniform
distribution $\mathcal{U}\left(0,2\right)$ s, to simulate dispatch
delays on the merchant side and loading/unloading times on the user
side. The velocity of each aerial node is randomly selected from the
range 5\textendash 20\,m/s, and the initial positions are randomly
assigned to capture variability in order arrivals, cargo-dependent
speed variations, and stochastic data transmission demands.

The source and destination are randomly located within the source
and destination areas on opposite sides, respectively, while the interference-protection
area with random-position \acpl{bs} is in the middle. The altitudes
of sources, vertical \acpl{uav}, horizontal \acpl{uav}, destinations,
and \acpl{bs} are 0 m, 45 m, 50 m, 0 m, and 5 m. The data generated
from the source node will be relayed to the destination node through
the aerial nodes with limited interference to the neighbor network.

The channel gains are realized by $h_{m,n}=g_{m,n}\xi_{m,n}$ according
to (\ref{eq:channel_model}). Specifically, the shape parameters $\kappa_{m,n}$
of Gamma distribution of small-scale fading $\xi_{m,n}$ for air-to-ground
links are set randomly in $[0,30]$, and that for air-to-air channels
are set randomly in $[30,60]$. Same as \cite{LiChe:J24}, the large-scale
fading $g_{m,n}$ includes path loss and shadowing, where the path
loss is generated by 3GPP \ac{umi} model and the channel block state
is generated by \ac{los} probability model, while the shadowing is
modeled by a log-normal distribution, with zero mean and a variance
of 8, and a correlation distance of 5 m. The default implementation
parameters are listed in Table \ref{tab:Implementation settings}.

We compare our performance with the following baselines.

\subsubsection{Aggregate routing \cite{LiChe:J24b}+\cite{AbdNiAboLi:J21}}

This scheme utilizes the predictive channel information to select
the route by the extended Dijkstra\textquoteright s algorithm \cite{AbdNiAboLi:J21}
and optimize the time boundary following the method similar to \cite{LiChe:J24b}.
Specifically, the scheme first constructs the average capacity matrix
$\bar{\mathbf{C}}_{M\times M}$ over the period $[0,T]$, then selects
the route that minimizes $\sum_{k=1}^{K}(K-1)/(\bar{\mathbf{C}}[o_{k},o_{k+1}])$,
where $K$ is the route length, $\bar{\mathbf{C}}[o_{k},o_{k+1}]$
is the average capacity between node $o_{k}$ and $o_{k+1}$, and
$(K-1)$ is the number of relays, indicating the number of segments
for the entire time $T$. Second, the scheme adjusts the time boundaries
using Algorithm~\ref{alg:ht_opt_bi_alg} according to the selected
route.

\subsubsection{Space-time routing \cite{HanXuZhaWan:J23}}

This scheme utilizes the predictive channel information to select
the route using a space-time graph model, a special case of the proposed
dynamic space-time graph, with fixed time boundaries, similar to the
approach in \cite{HanXuZhaWan:J23}, but without time boundary optimization.
Specifically, the scheme constructs the space-time graph as Section
\ref{subsec:graph_model} by setting $\mathbf{t}=\{kT/(M-1)\}_{k\in\{0,\cdots,M-1\}}$
and select the route using bottleneck path planning algorithm.

\subsubsection{Brute-force (Optimum)}

This scheme enumerates all possible paths from source to destination
and calculates the corresponding optimal time boundaries using Algorithm~\ref{alg:ht_opt_bi_alg},
then selects the path with least leakage power as the transmission
route. Its solution is optimal because Algorithm~\ref{alg:ht_opt_bi_alg}
can find the optimal time boundaries to $\mathscr{P}2$ for each route
according to Proposition~\ref{prop:opt_alg_1}.

\subsection{Single Commodity Performance}

To illustrate the operation of the proposed routing scheme, we provide
two representative routing examples in Figure \ref{fig:case_example}.
These cases demonstrate how the proposed method flexibly leverages
\ac{uav} mobility and multi-hop relaying to create spatial proximity
for effective data transmission. Specifically, delay-tolerant tasks
can exploit the mobility-induced proximity of \acpl{uav} as shown
in Subfigure (b), while delay-sensitive tasks rely on relays to artificially
create proximity Subfigure (c).

Fig.~\ref{fig:qos_c} illustrates the ratio of actual transmitted
data (throughput) to the planned data size (commodity size), where
a ratio of $1$ indicates that the transmitted data fully matches
the planned amount. The results show that the median values are all
greater than 1, which indicates the \ac{qos} constraint is satisfied
in 100\% of cases on average and confirms that the throughput constraint
is met in the expected sense as formulated in the problem.

We then demonstrate the near-optimality of the single commodity algorithm.
Fig.~\ref{fig:cdf_itfp_rand} shows the \ac{cdf} of interference
power under the random \ac{uav} initial positions, tolerable time
$T\in[1,60]$ s, data size $S\in[5,500]$ Mbits, neighbor network
size $M\in[1,20]$. The results show that the performance of the proposed
scheme is almost identical to the solution obtained via the brute-force
algorithm, confirming the optimality of the proposed algorithm. Furthermore,
the proposed scheme outperforms baseline schemes by approximately
13 dB, indicating that it can reduce interference to neighboring networks
by more than 10 times.

\begin{figure}
\begin{centering}
\includegraphics[width=1\columnwidth]{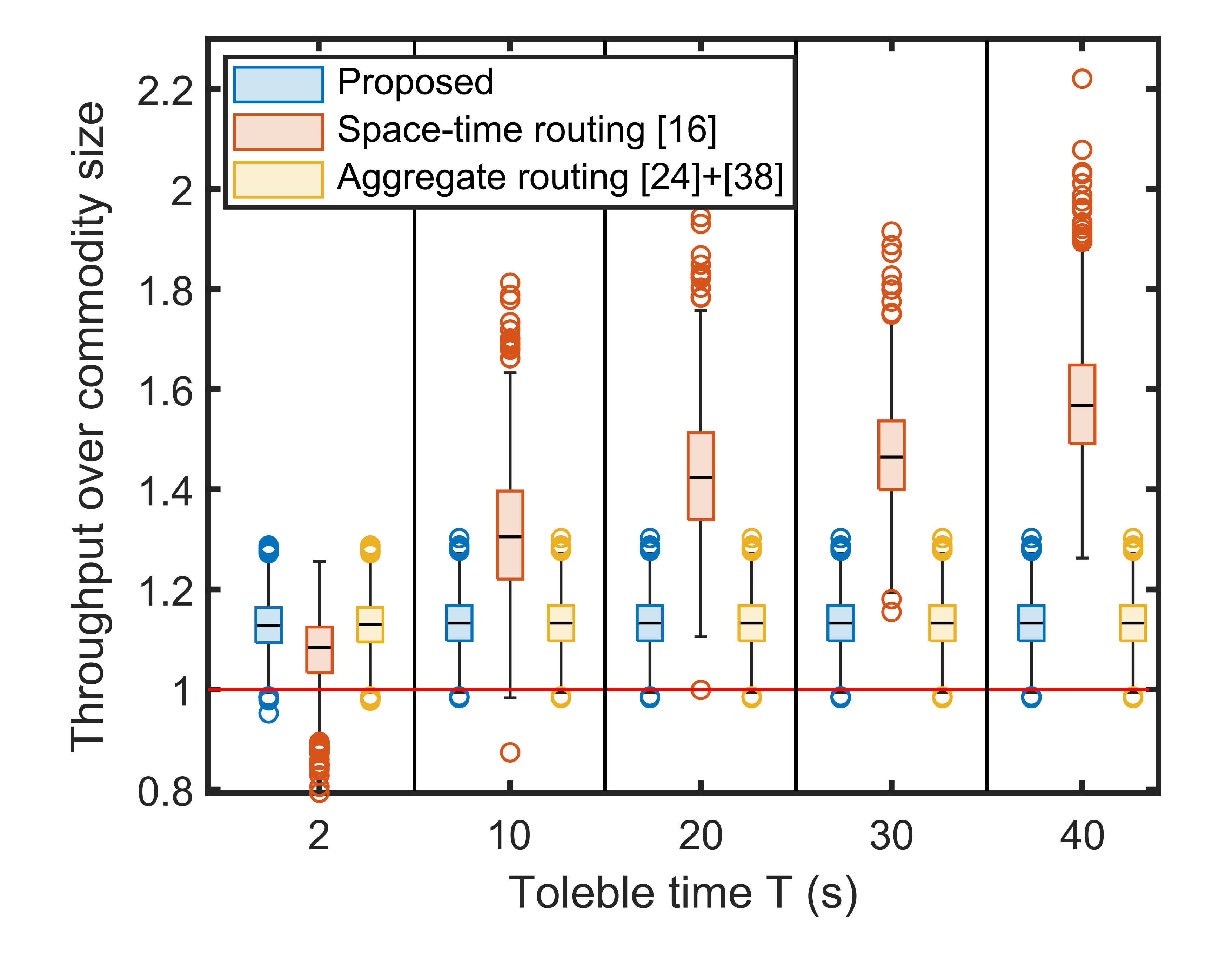}
\par\end{centering}
\caption{\label{fig:qos_c}\Ac{qos} satisfaction under different tolerable
transmission times.}
\end{figure}

\begin{figure}
\begin{centering}
\includegraphics[width=1\columnwidth]{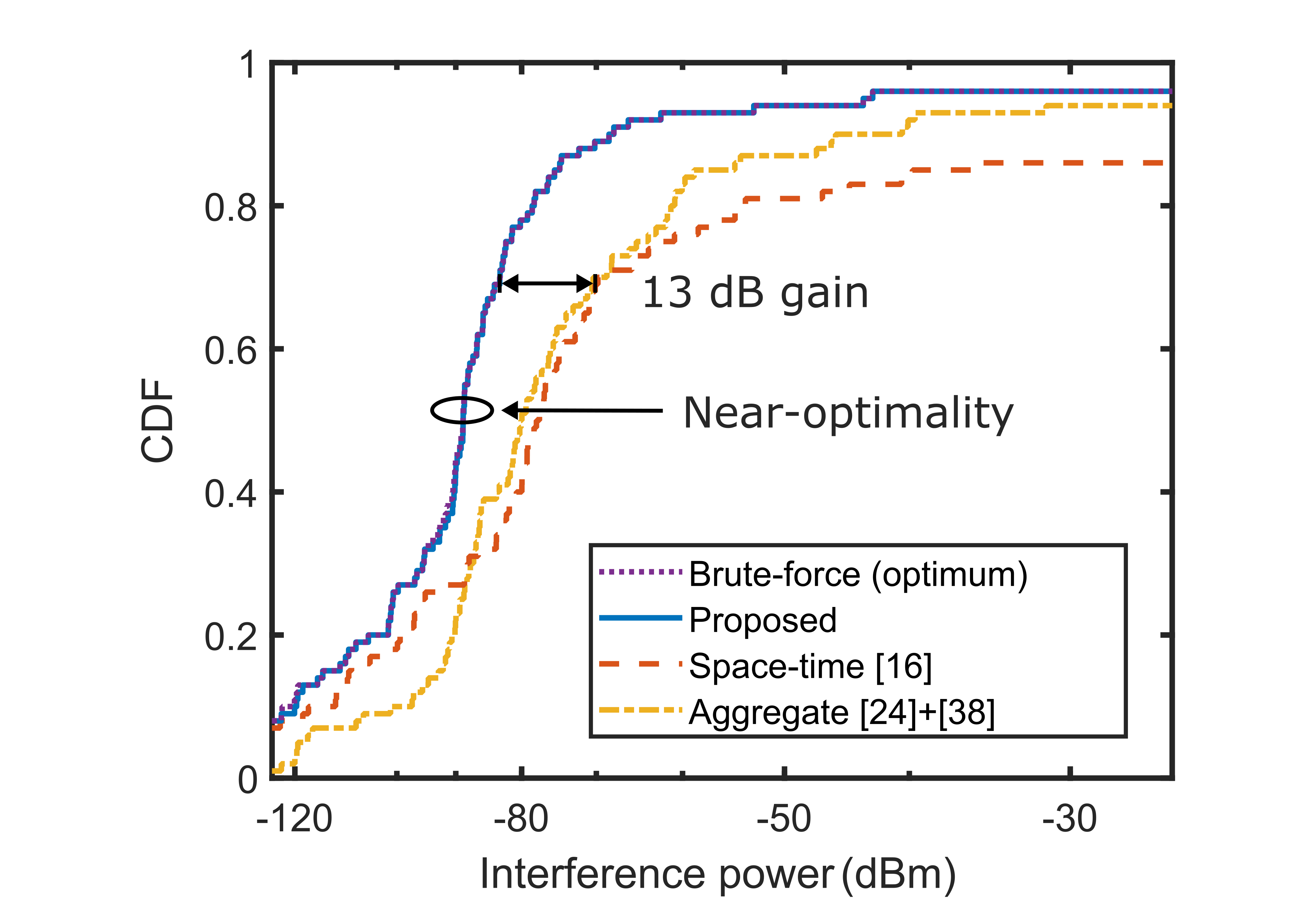}
\par\end{centering}
\caption{\label{fig:cdf_itfp_rand}The \ac{cdf} of the interference leakage
power $\vartheta$ of $100$ replicated random experiments, including
random initial \ac{uav} positions, speeds, and hovering times; random
source, destination, and neighboring node locations; and random data
sizes.}
\end{figure}

Fig.~\ref{fig:itfp_over_T} shows the interference leakage power
under different tolerable time $T$ and data size $S$. The results
demonstrate that the performance gain is particularly significant
for delay-sensitive scenarios (small $T$) and large data sizes (large
$S$). On average, the proposed scheme achieves a $6$ dB and $14$
dB improvement over the space-time routing scheme and the aggregate
routing scheme, respectively. Furthermore, both baseline schemes result
in over 25 dB more interference than the proposed algorithm when $T=1\text{ s}$
and $S=50\text{ Mbits}$. This demonstrates that, in delay-sensitive
scenarios involving large-volume data transmissions, both route selection
and time-boundary optimization are essential, the absence of either
leads to substantially higher interference.

\begin{figure}
\begin{centering}
\includegraphics[width=1\columnwidth]{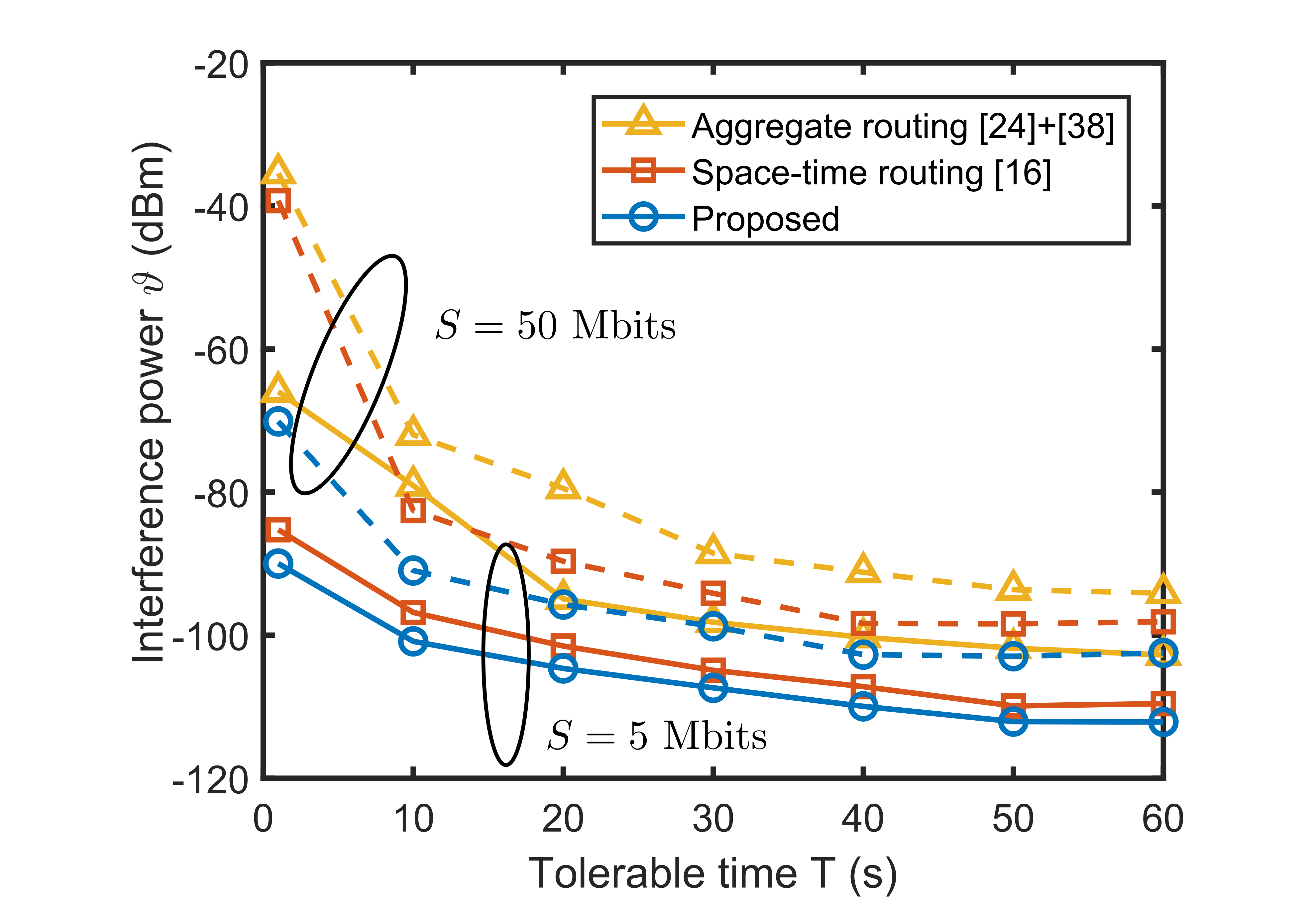}
\par\end{centering}
\caption{\label{fig:itfp_over_T}The interference leakage power under different
tolerable time $T$, evaluated for two data sizes: $S=5\text{ Mbits}$
and $S=50\text{ Mbits}$.}
\end{figure}
\begin{figure}
\begin{centering}
\includegraphics[width=1\columnwidth]{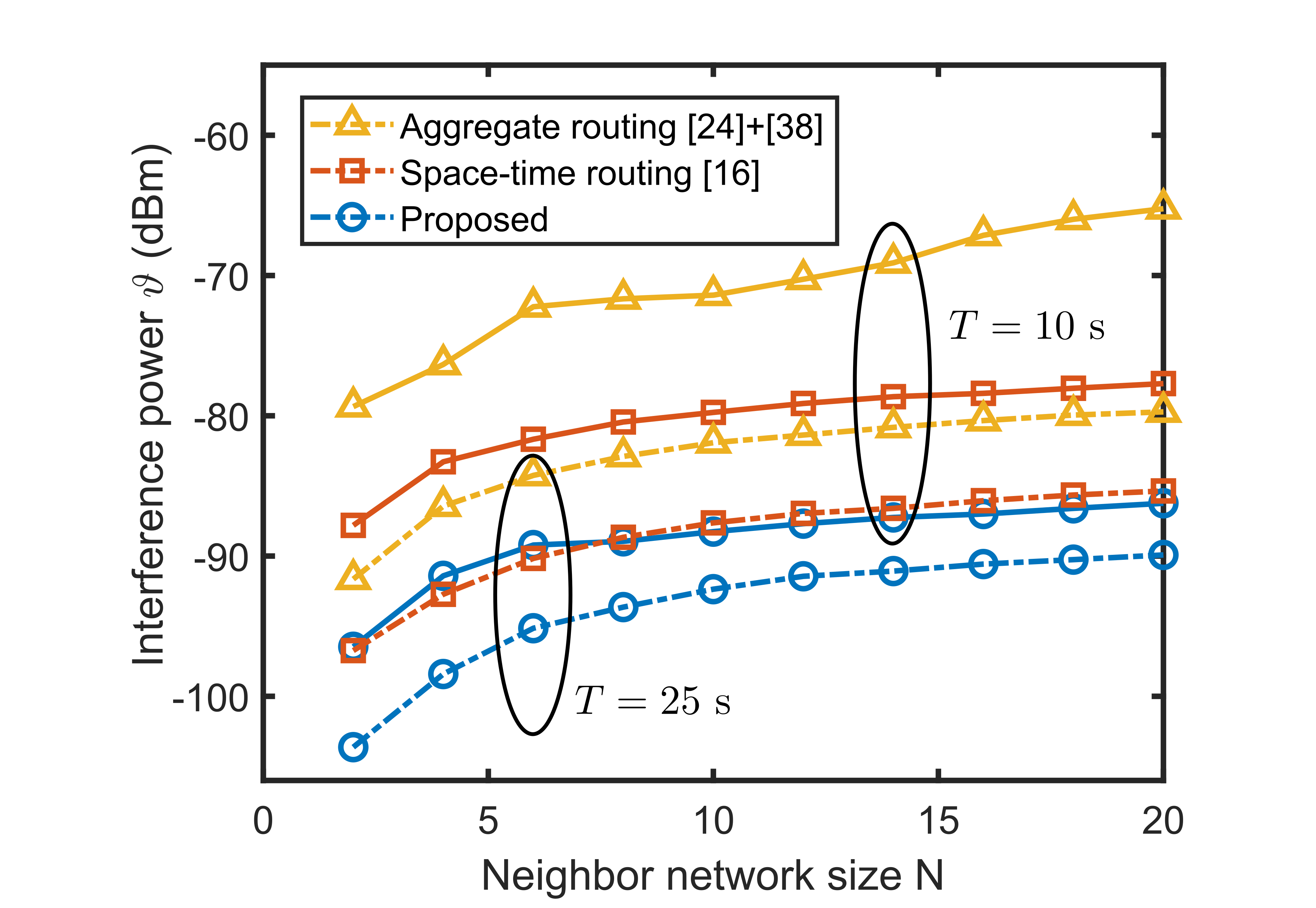}
\par\end{centering}
\caption{\label{fig:itfp_over_N}The interference leakage power under different
neighbor network size $N$, evaluated for two tolerable time: $T=10\text{ s}$
and $T=25\text{ s}$.}
\end{figure}

Fig.~\ref{fig:itfp_over_N} shows the interference leakage power
under different neighbor network size $N$, highlighting the robustness
of the proposed algorithm across various environments. While the interference
power increases with the density of the neighbor network, the proposed
scheme consistently achieves a performance gain of more than 4 dB
and 14 dB compared to the baselines. In addition, the interference
leakage power of the proposed scheme for $T=10$ s is almost lower
than that of the baselines for $T=25$ s, demonstrating that even
when the available time is reduced by half, the proposed scheme still
generates less interference leakage to the neighbor network.

\subsection{Multi-commodity Performance}

In this section, we evaluate the proposed algorithm in a multi-commodity
scenario and demonstrate the performance improvement achieved by splitting
a single large data into multiple smaller data for transmission.

Fig.~\ref{fig:itfp_over_Z} shows the interference leakage power
under different commodity number $Z$, highlighting the significant
improvements achieved in dense service scenarios (large $Z$). The
results show that the performance gap between the proposed scheme
and the baselines widens as the number of commodities increases. Specifically,
as $Z$ increases from $1$ to $19$, the interference leakage power
caused by the baselines rises by approximately $30$ dB, while that
caused by the proposed scheme only increases by about $10$ dB. This
indicates that the performance gap grows by a factor of $100$ as
the number of commodities increases by $20$ times. In other words,
for every 1X increase in the number of commodities, the performance
gap with the classical methods grows by a factor of $5$.

\begin{figure}
\begin{centering}
\includegraphics[width=1\columnwidth]{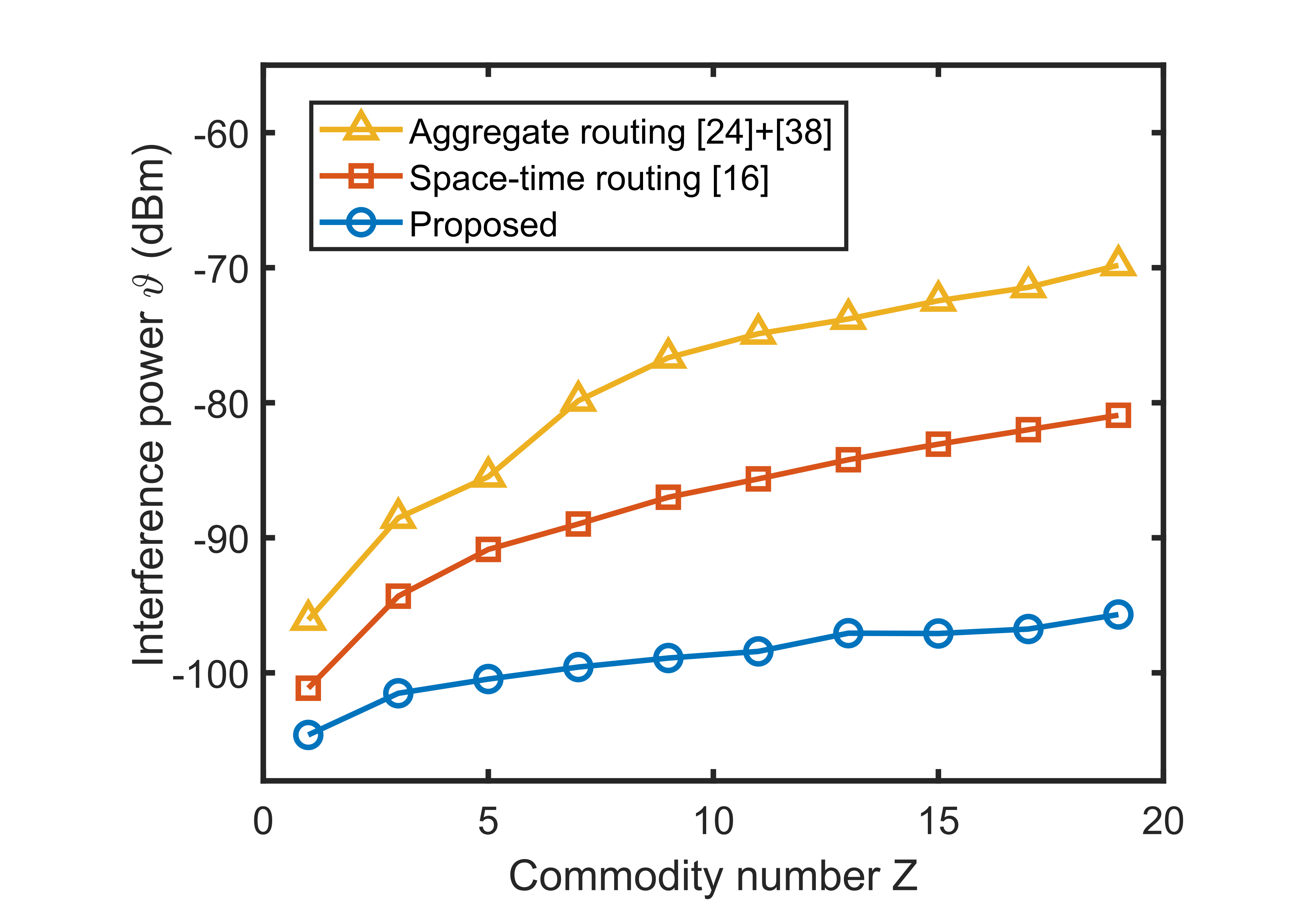}
\par\end{centering}
\caption{\label{fig:itfp_over_Z}The interference leakage power under different
commodity number $Z$. Here, each flow has a different source-destination
pair and the data size of each commodity is $S_{z}=S$.}
\end{figure}

\begin{figure}
\begin{centering}
\includegraphics[width=1\columnwidth]{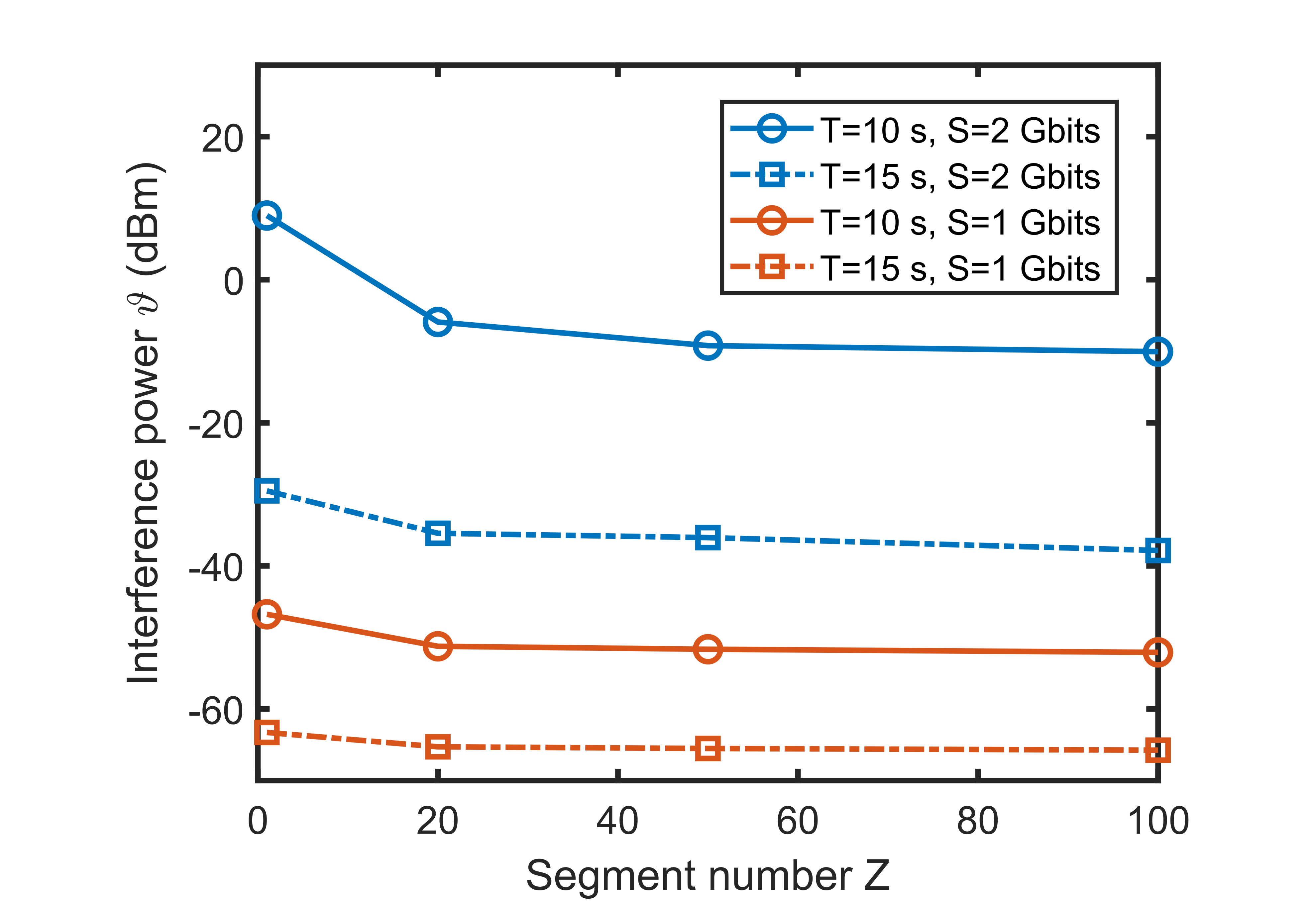}
\par\end{centering}
\caption{\label{fig:itfp_over_seg}The interference leakage power under different
segment number $Z$. Here, all flows share the same source-destination
pair and the data size of each commodity is $S_{z}=S/Z$.}
\end{figure}

Fig.~\ref{fig:itfp_over_seg} shows the interference leakage power
under different segment number $Z$, verifying that splitting a single
large data into multiple smaller segments for transmission improves
performance. Overall, the interference leakage power decreases as
the segment number increases, with significant improvements ($19$
dB) observed in delay-sensitive ($T=10$ s) and large-data ($S=2$
Gbits) transmissions. This is because fine-grained resource allocation
becomes critical when resources are relatively scarce, such as in
cases of small $T$ and large $S$.

\section{Conclusion\label{sec:Conclusion}}

This work addressed network-level optimization in low-altitude aerial
networks and proposed a dynamic space-time graph with virtual edges
and developed a cross-layer optimization framework to decouple resource
allocation from routing. Then an efficient single-commodity transportation
algorithm is developed by analyzing the optimality and deriving deterministic
capacity lower bound. Simulation results show that the proposed single-commodity
algorithm is almost optimal in various cases and achieves $30$ dB
performance improvement over the classical methods for delay-sensitive
and large data transportation tasks. In addition, the algorithm is
extended for multi-commodity transportation, and an efficient time-frequency
allocation algorithm is proposed. Simulation results show that the
proposed method achieves $100$X improvements in dense service scenarios,
and for a single large commodity, segmenting it into smaller parts
for transmission further achieves an additional 20 dB performance
improvement. As a result, the proposed strategy is more suitable for
deployment in spectrum-sharing low altitude networks.

\appendices

\section{Proof of Proposition~\ref{prop:power_allocation_policy}\label{sec:proof_power_allocation_policy}}

According to the conditions (\ref{eq:KKT_c1}), (\ref{eq:KKT_c4}),
and (\ref{eq:KKT_c5}), we prove that all optimal Lagrangian parameters
are positive. First, there must exist $t\in[t_{k},t_{k+1})$ such
that $v(t)>0$; otherwise, $v(t)\equiv0$ for all $t$ according to
(\ref{eq:KKT_c1}) and $\partial L/\partial\vartheta_{m,n}=1\neq0$,
contradicting the condition (\ref{eq:KKT_c4}). Then, according to
the condition (\ref{eq:KKT_c5}), we have $\mu>0$ because $\forall t$
\[
\mu=v\left(t\right)\max_{j\in\mathcal{N}}\left\{ h_{m,j}\left(t\right)\right\} /\mathbb{E}\left[\frac{B}{\ln2}\frac{h_{m,n}\left(t\right)}{\delta^{2}+p_{m,n}\left(t\right)h_{m,n}\left(t\right)}\right]
\]
which is greater than $0$ due to $v(t)>0,\exists t$. Therefore,
it can be derived that $v(t)>0$, $\forall t\in[t_{k},t_{k+1})$ due
to 
\[
v\left(t\right)=\frac{\mu\mathbb{E}\left[\frac{B}{\ln2}\frac{h_{m,n}\left(t\right)}{\delta^{2}+p_{m,n}\left(t\right)h_{m,n}\left(t\right)}\right]}{\max_{j\in\mathcal{N}}\left\{ h_{m,j}\left(t\right)\right\} }
\]
 according to condition (\ref{eq:KKT_c5}) and $\mu>0$.

Then, according to the conditions (\ref{eq:KKT_c2}) and (\ref{eq:KKT_c3}),
and the positive Lagrangian parameters, there must be 
\[
\max_{j\in\mathcal{N}}\left\{ p_{m,n}\left(t\right)h_{m,j}\left(t\right)\right\} -\vartheta_{m,n}=0,\forall t
\]
and
\begin{equation}
S-\int_{t_{k}}^{t_{k+1}}\mathbb{E}\left[B\log_{2}\left(1+\frac{p_{m,n}\left(t\right)h_{m,n}\left(t\right)}{\delta^{2}}\right)\right]dt=0.\label{eq:kkt_thp_c}
\end{equation}

Therefore, the optimal transmission power policy is 
\begin{equation}
p_{m,n}^{*}\left(t\right)=\vartheta_{m,n}^{*}/\max_{j\in\mathcal{N}}\left\{ h_{m,j}\left(t\right)\right\} .\label{eq:opt_p}
\end{equation}
Substitute (\ref{eq:opt_p}) into (\ref{eq:kkt_thp_c}), the optimal
interference power leakage $\vartheta_{m,n}^{*}$ is the solution
to 
\[
\int_{t_{k}}^{t_{k+1}}\mathbb{E}\left[B\log_{2}\left(1+\frac{\vartheta_{m,n}h_{m,n}\left(t\right)}{\max_{j\in\mathcal{N}}\left\{ h_{m,j}\left(t\right)\right\} \delta^{2}}\right)\right]dt=S.
\]

\section{Proof of Proposition~\ref{prop:lb_c}\label{sec:proof:lb_c}}

The the expected capacity is lower bounded by 
\begin{multline*}
\mathbb{E}\left[c_{m,n}\left(t\right)\right]=\mathbb{E}\left[\log_{2}\left(1+\frac{\vartheta h_{m,n}\left(t\right)}{\max_{j\in\mathcal{N}}\left\{ h_{m,j}\left(t\right)\right\} \delta^{2}}\right)\right]\\
\ge\mathbb{E}\left[\log_{2}\left(1+\frac{\vartheta h_{m,n}\left(t\right)}{\mathbb{E}\left[\max_{j\in\mathcal{N}}\left\{ h_{m,j}\left(t\right)\right\} \right]\delta^{2}}\right)\right]
\end{multline*}
based on the Jensen's inequality and the convexity of function $\log_{2}(1+a/x)$.
Denote $Y_{m,j}(t)\triangleq h_{m,j}(t)-\mathbb{E}[h_{m,j}(t)]$,
then we have 
\begin{align*}
\max_{j\in\mathcal{N}}\left\{ h_{m,j}\left(t\right)\right\}  & =\max_{j\in\mathcal{N}}\left\{ Y_{m,j}\left(t\right)+\mathbb{E}\left[h_{m,j}\left(t\right)\right]\right\} \\
 & \le\max_{j\in\mathcal{N}}\left\{ Y_{m,j}\left(t\right)\right\} +\max_{j\in\mathcal{N}}\left\{ \mathbb{E}\left[h_{m,j}\left(t\right)\right]\right\} .
\end{align*}
Take the expectation on both side, we have 
\begin{equation}
\mathbb{E}\left[\max_{j\in\mathcal{N}}\left\{ h_{m,j}\left(t\right)\right\} \right]\le\mathbb{E}\left[\max_{j\in\mathcal{N}}\left\{ Y_{m,j}\left(t\right)\right\} \right]+\max_{j\in\mathcal{N}}\left\{ \mathbb{E}\left[h_{m,j}\left(t\right)\right]\right\} .\label{eq:lb_m_1}
\end{equation}
Denote $\mathbb{V}[h_{m,j}(t)]$ as the variance of $h_{m,j}(t)$,
we have
\begin{align*}
 & \left(\mathbb{E}\left[\max_{j\in\mathcal{N}}\left\{ Y_{m,j}\left(t\right)\right\} \right]\right)^{2}\stackrel{(a)}{\le}\mathbb{E}\left[\left(\max_{j\in\mathcal{N}}\left\{ Y_{m,j}\left(t\right)\right\} \right)^{2}\right]\\
 & \quad\stackrel{(b)}{\le}\mathbb{E}\left[\max_{j\in\mathcal{N}}\left\{ Y_{m,j}\left(t\right)^{2}\right\} \right]\stackrel{(c)}{\le}\mathbb{E}\left[\sum_{j\in\mathcal{N}}Y_{m,j}\left(t\right)^{2}\right]\\
 & \quad\stackrel{(d)}{=}\sum_{j\in\mathcal{N}}\mathbb{E}\left[\left(h_{m,j}\left(t\right)-\mathbb{E}\left[h_{m,j}\left(t\right)\right]\right)^{2}\right]=\sum_{j\in\mathcal{N}}\mathbb{V}\left[h_{m,j}\left(t\right)\right]
\end{align*}
where (a) holds because of the Jensen's inequality and the convexity
of function $x^{2}$, (b) holds because of the convexity of the function
$\max(x)$, (c) holds because of $\max_{j\in\mathcal{N}}\{Y_{m,j}(t)^{2}\}\le\sum_{j\in\mathcal{N}}Y_{m,j}(t)^{2}$,
and (d) holds based on the definition of $Y_{m,j}\left(t\right)$.
Then, (\ref{eq:lb_m_1}) becomes 
\begin{align*}
 & \mathbb{E}\left[\max_{j\in\mathcal{N}}\left\{ h_{m,j}\left(t\right)\right\} \right]\le\max_{j\in\mathcal{N}}\left\{ \mathbb{E}\left[h_{m,j}\left(t\right)\right]\right\} +\sqrt{\sum_{j\in\mathcal{N}}\mathbb{V}\left[h_{m,j}\left(t\right)\right]}\\
 & \quad=\max_{j\in\mathcal{N}}\left\{ g_{m,j}\left(t\right)\right\} +\sqrt{\sum_{j\in\mathcal{N}}g_{m,j}\left(t\right)^{2}/\kappa_{m,j}\left(t\right)}
\end{align*}
where the equality holds because $h_{m,j}\left(t\right)$ follows
$\text{Gamma}(\kappa_{m,j}(t),g_{m,j}(t)/\kappa_{m,j}(t))$. Denote
$\omega_{m}(t)\triangleq\sqrt{\sum_{j\in\mathcal{N}}g_{m,j}(t)^{2}/\kappa_{m,j}(t)}$,
then, the expected capacity function $\mathbb{E}\{c_{m,n}(t)\}$ is
lower bounded by 
\begin{align}
 & \mathbb{E}\left[\log_{2}\left(1+\frac{\vartheta h_{m,n}\left(t\right)}{\left(\max_{j\in\mathcal{N}}\left\{ g_{m,j}\left(t\right)\right\} +\omega_{m}\left(t\right)\right)\delta^{2}}\right)\right]\nonumber \\
 & \stackrel{(a)}{\ge}\log_{2}\left(1+\frac{\vartheta g_{m,n}\left(t\right)}{\left(\max_{j\in\mathcal{N}}\left\{ g_{m,j}\left(t\right)\right\} +\omega_{m}\left(t\right)\right)\delta^{2}}\right)-\epsilon_{m,n}\left(t\right)\label{eq:appx_1_appx_2}
\end{align}
where $\epsilon_{m,n}(t)=\log_{2}(e)/\kappa_{m,n}(t)-\log_{2}(1+(2\kappa_{m,n}(t))^{-1})$,
and (a) holds according to \cite[Lemma 1]{LiChe:J24b}. In addition,
when targeted channel $h_{m,n}\left(t\right)$ is \ac{los}, that
is, $\kappa_{m,n}(t)\to\infty$, the both sides of (\ref{eq:appx_1_appx_2})
are equal.

\section{Proof of Proposition~\ref{prop:opt_c_p2_over_o}\label{sec:proof_opt_c_p2_over_o}}

Before proving the optimality to $\mathscr{P}2$, we derive a necessary
lemma first.
\begin{lem}
\label{lem:monotonicity}(Monotonicity of $w_{m,n}^{k}(t_{k},t_{k+1})$)
For any $t_{k}<t_{k+1}^{\prime}<t_{k+1}^{\prime\prime}$, it holds
that $w_{m,n}^{k}(t_{k},t_{k+1}^{\prime})>w_{m,n}^{k}(t_{k},t_{k+1}^{\prime\prime})$.
For any $t_{k}^{\prime}<t_{k}^{\prime\prime}<t_{k+1}$, it holds that
$w_{m,n}^{k}(t_{k}^{\prime},t_{k+1})<w_{m,n}^{k}(t_{k}^{\prime\prime},t_{k+1})$.
\end{lem}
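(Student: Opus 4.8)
The plan is to leverage the implicit characterization of the weight established in Proposition~\ref{prop:power_allocation_policy}: $w_{m,n}^{k}(t_k,t_{k+1})$ is the unique root $\vartheta_{m,n}^{*}$ of $\Upsilon_{m,n}(\vartheta_{m,n};t_k,t_{k+1})=S$, and $\Upsilon_{m,n}$ is strictly increasing in $\vartheta_{m,n}$. Monotonicity of the root in the integration limits thus reduces to monotonicity of $\Upsilon_{m,n}$ in those limits, followed by a routine inversion of a strictly increasing scalar function. First I would note that the integrand of $\Upsilon_{m,n}$ in (\ref{eq:sum_rate_def}), namely $\mathbb{E}[B\log_2(1+\vartheta_{m,n}h_{m,n}(t)/(\max_{j\in\mathcal{N}}\{h_{m,j}(t)\}\delta^2))]$, is strictly positive whenever $\vartheta_{m,n}>0$, since the channel gains are positive almost surely under the Gamma model in (\ref{eq:channel_model}). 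Because $S>0$, the root $w_{m,n}^{k}$ is itself strictly positive, so the integrand is strictly positive when evaluated at $\vartheta_{m,n}=w_{m,n}^{k}$.

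For the first claim (fixed $t_k$, increasing upper limit), I would evaluate $\Upsilon_{m,n}$ at the frozen value $\vartheta=w_{m,n}^{k}(t_k,t_{k+1}')$ over the longer horizon $[t_k,t_{k+1}'')$. Splitting $\int_{t_k}^{t_{k+1}''}=\int_{t_k}^{t_{k+1}'}+\int_{t_{k+1}'}^{t_{k+1}''}$ and invoking strict positivity of the integrand gives $\Upsilon_{m,n}(w_{m,n}^{k}(t_k,t_{k+1}');t_k,t_{k+1}'')>\Upsilon_{m,n}(w_{m,n}^{k}(t_k,t_{k+1}');t_k,t_{k+1}')=S$. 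Since $\Upsilon_{m,n}(\cdot;t_k,t_{k+1}'')$ is strictly increasing with root $w_{m,n}^{k}(t_k,t_{k+1}'')$, the fact that it exceeds $S$ at $\vartheta=w_{m,n}^{k}(t_k,t_{k+1}')$ forces $w_{m,n}^{k}(t_k,t_{k+1}'')<w_{m,n}^{k}(t_k,t_{k+1}')$, as required.

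The second claim (fixed $t_{k+1}$, increasing lower limit) follows by the symmetric argument: evaluating $\Upsilon_{m,n}$ at $\vartheta=w_{m,n}^{k}(t_k',t_{k+1})$ over the shorter horizon $[t_k'',t_{k+1})$ and peeling off the strictly positive integral over $[t_k',t_k'')$ yields $\Upsilon_{m,n}(w_{m,n}^{k}(t_k',t_{k+1});t_k'',t_{k+1})<S$. Strict monotonicity of $\Upsilon_{m,n}$ in its first argument, whose root on $[t_k'',t_{k+1})$ is $w_{m,n}^{k}(t_k'',t_{k+1})$, then gives $w_{m,n}^{k}(t_k'',t_{k+1})>w_{m,n}^{k}(t_k',t_{k+1})$.

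I do not anticipate a genuine obstacle; the only point requiring care is the \emph{strictness} of the inequalities, which rests on the integrand being strictly rather than weakly positive. This is guaranteed by $\vartheta_{m,n}>0$ together with $h_{m,n}(t)>0$ almost surely, ensuring that the interval-length comparison of $\Upsilon_{m,n}$ is strict. Everything else is the standard observation that the root of a strictly increasing function moves in the direction opposite to a pointwise increase of the function.
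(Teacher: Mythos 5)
Your proposal is correct and is essentially the paper's own argument: the paper likewise invokes Proposition~\ref{prop:power_allocation_policy} to write both weights as roots of $\Upsilon_{m,n}(\cdot;\cdot,\cdot)=S$, splits the integral over the longer interval, and uses strict positivity of the tail integral together with monotonicity in $\vartheta$ to force the inequality between the roots. The only cosmetic difference is that the paper subtracts the two equations $\Upsilon_{m,n}(\vartheta';t_k,t_{k+1}')=\Upsilon_{m,n}(\vartheta'';t_k,t_{k+1}'')=S$ and reads off the sign of the resulting difference integral, whereas you freeze one root and evaluate $\Upsilon_{m,n}$ at it over the other interval; your explicit justification of strictness (positive root since $S>0$, gains positive almost surely) is a point the paper leaves implicit.
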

\begin{proof}
Denote $\vartheta^{\prime}=w_{m,n}^{k}(t_{k},t_{k+1}^{\prime})$ and
$\vartheta^{\prime\prime}=w_{m,n}^{k}(t_{k},t_{k+1}^{\prime\prime})$
are the optimal values of problem $\mathscr{P}1$ with time boundaries
$\{t_{k},t_{k+1}^{\prime}\}$ and $\{t_{k},t_{k+1}^{\prime\prime}\}$.
According to optimal condition in Proposition~\ref{prop:power_allocation_policy},
we have 
\begin{align*}
 & \int_{t_{k}}^{t_{k+1}^{\prime}}\mathbb{E}\left[B\log_{2}\left(1+\frac{\vartheta^{\prime}h_{m,n}\left(t\right)}{\max_{j\in\mathcal{N}}\left\{ h_{m,j}\left(t\right)\right\} \delta^{2}}\right)\right]dt\\
 & =\int_{t_{k}}^{t_{k+1}^{\prime\prime}}\mathbb{E}\left[B\log_{2}\left(1+\frac{\vartheta^{\prime\prime}h_{m,n}\left(t\right)}{\max_{j\in\mathcal{N}}\left\{ h_{m,j}\left(t\right)\right\} \delta^{2}}\right)\right]dt
\end{align*}
and equals to $S$. Spiting the time $[t_{k},t_{k}^{\prime\prime})$
to two parts $[t_{k},t_{k}^{\prime})$ and $[t_{k}^{\prime},t_{k}^{\prime\prime})$,
we have 
\begin{align*}
 & \int_{t_{k}}^{t_{k+1}^{\prime}}\mathbb{E}\Bigg[B\log_{2}\left(1+\frac{\vartheta^{\prime}h_{m,n}\left(t\right)}{\max_{j\in\mathcal{N}}\left\{ h_{m,j}\left(t\right)\right\} \delta^{2}}\right)\\
 & \quad\quad\quad\quad\ \,-B\log_{2}\left(1+\frac{\vartheta^{\prime\prime}h_{m,n}\left(t\right)}{\max_{j\in\mathcal{N}}\left\{ h_{m,j}\left(t\right)\right\} \delta^{2}}\right)\Bigg]dt\\
 & =\int_{t_{k+1}^{\prime}}^{t_{k+1}^{\prime\prime}}\mathbb{E}\left[B\log_{2}\left(1+\frac{\vartheta^{\prime\prime}h_{m,n}\left(t\right)}{\max_{j\in\mathcal{N}}\left\{ h_{m,j}\left(t\right)\right\} \delta^{2}}\right)\right]dt\\
 & >0
\end{align*}
which means $\vartheta^{\prime}>\vartheta^{\prime\prime}$. In other
word, $w_{m,n}^{k}(t_{k},t_{k+1}^{\prime})>w_{m,n}^{k}(t_{k},t_{k+1}^{\prime\prime})$.

Similarly, it can be proven that $w_{m,n}^{k}(t_{k}^{\prime},t_{k+1})<w_{m,n}^{k}(t_{k}^{\prime\prime},t_{k+1})$
for any $t_{k}^{\prime}<t_{k}^{\prime\prime}<t_{k+1}$.
\end{proof}
Given route variable $\mathbf{o}$, first, we prove the sufficiency.
Assuming that $\{\mathbf{t}^{*},\vartheta^{*}\}$ is the optimal solution
to $\mathscr{P}2$, then, we will show that $w_{o(k),o(k+1)}^{k}(t_{k}^{*},t_{k+1}^{*})=\vartheta^{*}$
for all $k\in\{1,\cdots,M-1\}$ by contradiction.

Suppose that there is a $k\in\{1,\cdots,M-1\}$ with $w_{o(k),o(k+1)}^{k}(t_{k}^{*},t_{k+1}^{*})\triangleq\vartheta_{k}<\vartheta^{*}$.
Then we can find $t_{k}^{\prime}>t_{k}^{*}$, $t_{k+1}^{\prime}<t_{k+1}^{*}$,
and $\vartheta_{k}^{\prime}\in(\vartheta_{k},\vartheta^{*})$ with
$w_{o(k),o(k+1)}^{k}(t_{k}^{\prime},t_{k+1}^{\prime})=\vartheta_{k}^{\prime}$
due to the monotonicity property of $w_{o(k),o(k+1)}^{k}(t_{k},t_{k+1})$
over $t_{k}$ (increasing) and $t_{k+1}$ (decreasing).

For example, we first determine $t_{k}^{\prime}$ by solving the following
equation 
\[
w_{o(k),o(k+1)}^{k}(t_{k}^{\prime},t_{k+1}^{*})=\vartheta_{k}^{\prime}+(\vartheta_{k}^{\prime}-\vartheta_{k})/2
\]
where $t_{k}^{\prime}>t_{k}^{*}$ because $\vartheta_{k}+(\vartheta_{k}^{\prime}-\vartheta_{k})/2>\vartheta_{k}$
due to $\vartheta_{k}^{\prime}>\vartheta_{k}$ and the increasing
monotonicity of $w_{o(k),o(k+1)}^{k}(t_{k},t_{k+1})$ over $t_{k}$.
Then, we determine $t_{k+1}^{\prime}$ by solving the following equation
\[
w_{o(k),o(k+1)}^{k}(t_{k}^{\prime},t_{k+1}^{\prime})=\vartheta_{k}^{\prime}
\]
where $t_{k+1}^{\prime}<t_{k+1}^{*}$ because $\vartheta_{k}+(\vartheta_{k}^{\prime}-\vartheta_{k})/2<\vartheta_{k}^{\prime}$
due to $\vartheta_{k}^{\prime}>\vartheta_{k}$ and the decreasing
monotonicity of $w_{o(k),o(k+1)}^{k}(t_{k},t_{k+1})$ over $t_{k+1}$.

Then, we have $w_{o(k-1),o(k)}^{k-1}(t_{k-1}^{*},t_{k}^{\prime})<\vartheta^{*}$
and $w_{o(k+1),o(k+2)}^{k+1}(t_{k+1}^{\prime},t_{k+2}^{*})<\vartheta^{*}$
because $t_{k}^{\prime}>t_{k}^{*}$, $t_{k+1}^{\prime}<t_{k+1}^{*}$,
and the monotonicity property of $w_{o(k),o(k+1)}^{k}(t_{k},t_{k+1})$
over $t_{k}$ (increasing) and $t_{k}$ (decreasing). By applying
the same procedure, we can find $t_{k-1}^{\prime}>t_{k-1}^{*}$ and
$\vartheta_{k-1}^{\prime}<\vartheta^{*}$ such that $w_{o(k-1),o(k)}^{k-1}(t_{k-1}^{\prime},t_{k}^{\prime})=\vartheta_{k-1}^{\prime}$,
and $t_{k+1}^{\prime}<t_{k+1}^{*}$ and $\vartheta_{k+1}^{\prime}<\vartheta^{*}$
such that $w_{o(k+1),o(k+2)}^{k+1}(t_{k+1}^{\prime},t_{k+2}^{\prime})=\vartheta_{k+1}^{\prime}$.

By induction, one can construct a new time sequence $\{t_{k}^{\prime}\}_{k\in\{1,\cdots,M\}}$
with $w_{o(k),o(k+1)}^{k}(t_{k}^{\prime},t_{k+1}^{\prime})=\vartheta_{k}^{\prime}<\vartheta^{*}$
for all $k\in\{1,\cdots,M-1\}$. Let $\vartheta^{\prime}=\max_{k\in\{1,\cdots,M-1\}}\{\vartheta_{k}^{\prime}\}$.
We have $\{\mathbf{t}^{\prime},\vartheta^{\prime}\}$ is feasible
for problem $\mathscr{P}2$ and $\vartheta^{\prime}<\vartheta$, which
indicate that $\{\mathbf{t}^{*},\vartheta^{*}\}$ is sub-optimal.
This contradicts the initial assumption.

In other words, $\{\mathbf{t}^{*},\vartheta^{*}\}$ is the optimal
solution to problem $\mathscr{P}2$ only if $w_{o(k),o(k+1)}^{k}(t_{k}^{*},t_{k+1}^{*})=\vartheta^{*}$
for all $k\in\{1,\cdots,M-1\}$.

Next, we prove the necessity. Assuming that $\{\mathbf{t}^{*},\vartheta^{*}\}$
is a point with $w_{o(k),o(k+1)}^{k}(t_{k}^{*},t_{k+1}^{*})=\vartheta^{*}$
for all $k\in\{1,\cdots,M-1\}$, then, we will show that $\{\mathbf{t}^{*},\vartheta^{*}\}$
is the optimal solution to problem $\mathscr{P}2$ by contradiction.

Suppose that $\{\mathbf{t}^{*},\vartheta^{*}\}$ is not the optimal
solution to problem $\mathscr{P}2$, that is, there is a feasible
point for $\mathscr{P}2$, $\{\mathbf{t}^{\prime},\vartheta^{\prime}\}$,
with $\vartheta^{\prime}<\vartheta^{*}$. To ensure that the constraints
$w_{o(k),o(k+1)}^{k}(t_{k},t_{k+1})\le\vartheta$ holding for $k\in\{1,2,\cdots,M-2\}$,
there must be $t_{k}^{\prime}>t_{k}^{*}$ for $k\in\{2,\cdots,M-1\}$
according to Appendix~\ref{sec:proof_c_increasing_over_vartheta}.
However, the constraint $w_{o(k),o(k+1)}^{k}(t_{k},t_{k+1})\le\vartheta$
for $k=M-1$ cannot hold because 
\begin{align*}
w_{o(M-1),o(M)}^{M-1}\left(t_{M-1}^{\prime},t_{M}^{\prime}\right) & \stackrel{(a)}{\ge}w_{o(M-1),o(M)}^{M-1}\left(t_{M-1}^{*},t_{M}^{*}\right)\\
 & =\vartheta^{*}\stackrel{(b)}{>}\vartheta^{\prime}
\end{align*}
where $(a)$ holds because $t_{M-1}^{\prime}>t_{M-1}^{*}$, $t_{M}^{\prime}=t_{M}^{*}=T$,
and the increasing monotonicity of $w_{o(k),o(k+1)}^{k}(t_{k},t_{k+1})$
over $t_{k}$; (b) holds because of the hypothesis. As a result, the
point $\{\mathbf{t}^{\prime},\vartheta^{\prime}\}$ is infeasible.
This contradicts the hypothesis.

In other words, $\{\mathbf{t}^{*},\vartheta^{*}\}$ is the optimal
solution to problem $\mathscr{P}2$ if $w_{o(k),o(k+1)}^{k}(t_{k}^{*},t_{k+1}^{*})=\vartheta^{*}$
for all $k\in\{1,\cdots,M-1\}$.

\section{Proof of Proposition~\ref{prop:c_increasing_over_vartheta}\label{sec:proof_c_increasing_over_vartheta}}

Denote the time boundaries computed by $\vartheta$ with $\Upsilon(\vartheta;t_{k},t_{k+1})$
as $\{t_{k}(\vartheta)\}$. Next, we will prove the monotonicity of
$\{t_{k}(\vartheta)\}$ over $\vartheta$. Without loss of generality,
we assume $\vartheta_{1}<\vartheta_{2}$, then we will prove the monotonicity
of $t_{k}(\vartheta)$ for $k=2,3,\cdots,M$ by induction. Starting
with $k=2$, we have $w_{o(1),o(2)}^{k}(t_{1},t_{2}(\vartheta_{1}))=\vartheta_{1}<\vartheta_{2}=w_{o(1),o(2)}^{k}(t_{1},t_{2}(\vartheta_{2}))$.
Then, we have $t_{2}(\vartheta_{1})>t_{2}(\vartheta_{2})$ due to
the monotonically decreasing of $w_{m,n}^{k}(t_{k},t_{k+1})$ over
$t_{k+1}$ in Lemma~\ref{lem:monotonicity}.

Assuming $t_{k}(\vartheta_{1})>t_{k}(\vartheta_{2})$ for $k\in\{3,\cdots,M-1\}$,
then 
\begin{align*}
 & w_{o(k),o(k+1)}^{k}\left(t_{k}\left(\vartheta_{1}\right),t_{k+1}\left(\vartheta_{1}\right)\right)=\vartheta_{1}\\
 & \quad<\vartheta_{2}=w_{o(k),o(k+1)}^{k}\left(t_{k}\left(\vartheta_{2}\right),t_{k+1}\left(\vartheta_{2}\right)\right)\\
 & \quad\stackrel{(a)}{<}w_{o(k),o(k+1)}^{k}\left(t_{k}\left(\vartheta_{1}\right),t_{k+1}\left(\vartheta_{2}\right)\right)
\end{align*}
where (a) holds due to $t_{k}(\vartheta_{1})>t_{k}(\vartheta_{2})$
and the monotonically increasing of $w_{m,n}^{k}(t_{k},t_{k+1})$
over $t_{k}$ in Lemma~\ref{lem:monotonicity}. Similarly, because
of the the monotonically decreasing of $w_{m,n}^{k}(t_{k},t_{k+1})$
over $t_{k+1}$ in Lemma~\ref{lem:monotonicity}, we have $t_{k+1}(\vartheta_{1})>t_{k+1}(\vartheta_{2})$.

In conclusion, $t_{k}(\vartheta_{1})>t_{k}(\vartheta_{2})$ for all
$k\in\{2,3,\cdots,M\}$ if $\vartheta_{1}<\vartheta_{2}$. It includes
$t_{M}(\vartheta_{1})>t_{M}(\vartheta_{2})$ if $\vartheta_{1}<\vartheta_{2}$.
In other words, we have $t_{M}(\vartheta)$ is strictly decreasing
over $\vartheta$.

\section{Proof of Proposition~\ref{prop:opt_c_l}\label{sec:proof:opt_c_l}}

If $\Psi(\vartheta_{1})=\varnothing$, it is obvious that $\vartheta(\vartheta_{1})\subseteq\vartheta(\vartheta_{2})$.
If $\Psi(\vartheta_{1})\neq\varnothing$, for any element $\mathbf{L}^{\prime}\in\Psi\left(\vartheta_{2}\right)$,
we have 
\[
\begin{cases}
\int_{t_{k,z}}^{t_{k+1,z}}\mathbb{E}\Big[\log\Big(1+\frac{\vartheta_{1}h_{o\left(k,z\right),o\left(k+1,z\right)}\left(t\right)}{\max_{j\in\mathcal{N}}\left\{ h_{o\left(k,z\right),j}\left(t\right)\right\} \delta^{2}}\Big)\Big]dt\\
\quad\quad\quad\quad\quad\quad\quad\times Bl_{z}^{\prime}\left(t\right)\ge S_{z},\forall z\in\mathcal{Z},\forall k\in\mathcal{M}\\
\boldsymbol{l}^{\prime}\left(t\right)\in\mathcal{L},\forall t
\end{cases}
\]
according to the definition of $\Psi\left(\vartheta\right)$ in (\ref{eq:def_Phi}).
Then there must be 
\[
\begin{cases}
\int_{t_{k,z}}^{t_{k+1,z}}\mathbb{E}\Big[\log\Big(1+\frac{\vartheta_{2}h_{o\left(k,z\right),o\left(k+1,z\right)}\left(t\right)}{\max_{j\in\mathcal{N}}\left\{ h_{o\left(k,z\right),j}\left(t\right)\right\} \delta^{2}}\Big)\Big]dt\\
\quad\quad\quad\quad\quad\quad\quad\times Bl_{z}^{\prime}\left(t\right)\ge S_{z},\forall z\in\mathcal{Z},\forall k\in\mathcal{M}\\
\boldsymbol{l}^{\prime}\left(t\right)\in\mathcal{L},\forall t
\end{cases}
\]
because of the capacity function $\log_{2}(1+\vartheta x)$ is increasing
over $\vartheta$ ($x>0$) and $0\le\vartheta_{1}<\vartheta_{2}$.
Thus $\mathbf{L}^{\prime}$ satisfies conditions (\ref{eq:multi_thp_c_v2})
and (\ref{eq:multi_l_c_v2_2}) when $\vartheta=\vartheta_{2}$. In
other words, $\mathbf{L}^{\prime}\in\Psi\left(\vartheta_{2}\right)$
according to the definition of $\Psi\left(\vartheta\right)$ in (\ref{eq:def_Phi}).
Accordingly, $\vartheta(\vartheta_{1})\subseteq\vartheta(\vartheta_{2})$.

In conclusion, $\vartheta(\vartheta_{1})\subseteq\vartheta(\vartheta_{2})$
if $0\le\vartheta_{1}<\vartheta_{2}$.

\bibliographystyle{IEEEtran}
\bibliography{IEEEabrv,StringDefinitions,JCgroup,ChenBibCV,JCgroup-bw}

\end{document}